\newcommand{\vertiii}[1]{{\left\vert\kern-0.25ex\left\vert\kern-0.25ex\left\vert #1 
    \right\vert\kern-0.25ex\right\vert\kern-0.25ex\right\vert}}
\def\Tr{\mathsf{T}}
\def\Hr{\mathsf{H}}
\renewcommand{\blue}{\color{black}}
\newtheorem{assumption}{\hspace{0pt}\bf AS\hspace{-0.15cm}}
\newtheorem{lemma}{\hspace{0pt}\bf Lemma}
\newtheorem{proposition}{\hspace{0pt}\bf Proposition}
\newtheorem{theorem}{\hspace{0pt}\bf Theorem}
\newtheorem{remark}{\hspace{0pt}\bf Remark}
\newtheorem{definition}{\hspace{0pt}\bf Definition}
\begin{document}

\title{Gated Graph Recurrent Neural Networks}

\author{Luana~Ruiz, Fernando~Gama~
        and~Alejandro~Ribeiro
\thanks{The work in this paper was supported by NSF CCF 1717120, ARO W911NF1710438, ARL DCIST CRA W911NF-17-2-0181, ISTC-WAS and Intel DevCloud. Preliminary results appear at the EUSIPCO19 conference \cite{ruiz2019gated}. L. Ruiz, F. Gama and A. Ribeiro are with the Dept. of Electrical and Systems Eng., Univ. of Pennsylvania.  Email: \{rubruiz,fgama,aribeiro\}@seas.upenn.edu.
}
}

\markboth{IEEE TRANSACTIONS ON SIGNAL PROCESSING (SUBMITTED)}%
{Gated Graph Recurrent Neural Networks}

\maketitle

\begin{abstract}
    Graph processes exhibit a temporal structure determined by the sequence index and and a spatial structure determined by the graph support. To learn from graph processes, an information processing architecture must then be able to exploit both underlying structures. We introduce Graph Recurrent Neural Networks (GRNNs) \blue{as a general learning framework that achieves} this goal by leveraging \blue{the notion of a recurrent hidden state} together with graph signal processing (GSP). In the GRNN, the number of learnable parameters is independent of the length of the sequence and of the size of the graph, guaranteeing scalability. We prove that GRNNs are permutation equivariant and that they are stable to perturbations of the underlying graph support. \blue{To address the problem of vanishing gradients, we also put forward gated GRNNs with three different gating mechanisms: time, node and edge gates. In numerical experiments involving both synthetic and real datasets, time-gated GRNNs are shown to improve upon GRNNs in problems with long term dependencies, while node and edge gates help encode long range dependencies present in the graph.
The numerical results also show that GRNNs outperform GNNs and RNNs, highlighting the importance of taking both the temporal and graph structures of a graph process into account.}
\end{abstract}

\begin{IEEEkeywords}
graph recurrent neural networks, graph convolutions, gating, stability, graph signal processing 
\end{IEEEkeywords}

\IEEEpeerreviewmaketitle


\section{Introduction} \label{sec:intro}



\blue{Graph neural networks (GNNs) \cite{Kipf17-ClassifGCN, Defferrard17-CNNGraphs, Gama19-Architectures} are a popular information processing architecture in graph signal processing (GSP), having found applications in problems such as recommender systems \cite{Ruiz20-Nonlinear} and robot path planning \cite{tolstaya2020learning}. Their popularity is largely explained by state-of-the-art performances achieved in several learning tasks \cite{Goodfellow16-DeepLearning} which, in turn, are related to invariance and stability properties that they inherit from graph convolutions \cite{Gama19-Stability}. At the same time, GNNs are somewhat limited in that they are designed to process data with only one type of structure---the graph. This creates a gap for an important class of GSP problems involving graph signals that can also change with time, which we call \emph{graph processes}. 

Graph processes contain a time dimension, reflected by the indices of the sequence, and a \textit{fixed} graph structure, which is inherent to graph signals {\cite{girault2015translation,marques2017stationary}}. They have been used to model data such as weather variables on weather station networks \cite{perraudin2017stationary} and seismic wave readings on a network of seismographs \cite{grassi2017time}. In problems involving sequences where the data elements are Euclidean, recurrent neural networks (RNNs) tend to be the architecture of choice, as they leverage recurrence to model the time dependencies present in sequential data \cite{pascanu2013construct,graves2013generating, schuster1997bidirectional}. For graph processes, this motivates retaining the same recurrence relation but replacing the input-to-state and state-to-state linear transformations of the RNN by linear graph filters, which allows taking both the temporal and the graph structure of graph processes into account.

Implementations of such a \emph{graph recurrent} architecture can be found in the literature, usually focusing on either a particular type of graph \cite{seo2018structured} (undirected) or problem \cite{li2017diffusion} (traffic forecasting). 
However, the basic construction of a graph recurrent neural network (GRNN) and its fundamental properties have not yet been investigated in detail. In this paper, we propose to do so by using GSP to come up with a unified framework for GRNNs.  
We write graph operations in terms of a generic graph shift operator (GSO), making for a more general architecture and drawing attention away from specific matrix representations, e.g. the random walk matrix used to define the architecture in \cite{li2017diffusion} or the graph Laplacian used to define the architecture in \cite{seo2018structured}. We also introduce a convolutional parametrization of the input-to-state and state-to-state operations in Section \ref{sec:GRNN} which, while not novel, guarantees equivariance to node relabelings and makes the number of parameters independent of the size of the graph \cite{Ruiz20-Nonlinear}. In Section \ref{sec:stability}, one of our main contributions is then proving that, like GNNs, GRNNs also exhibit stability to relative perturbations of the underlying graph. This means that changes in the output caused by changes in the graph are bounded by the size of the perturbation \cite{Gama19-Stability}, with the key difference that in GRNNs this stability deteriorates with the length of the sequence [cf. Theorems \ref{thm:stability} and \ref{thm:GGRNNstability}]. Another important contribution is the introduction of generic input and forget gate operators that we break down in three gating strategies---time, node and edge gating---, all of which interact with the graph in different ways (Section \ref{sec:gatedGRNN}). 
In the numerical experiments in Section \ref{sec:sims}, time gates prove useful for encoding long term temporal dependencies, while node and edge gates help encode long range spatial dependencies on the graph. Note that these contributions have broad applicability as they extend to all architectures fitting the GRNN framework, e.g. the aforementioned  \cite{seo2018structured,li2017diffusion}.
}

\blue{Related work on learning problems involving graph processes also includes \cite{zhang2018gaan} and \cite{yu2017spatio}, which, like \cite{li2017diffusion}, have an emphasis on traffic forecasting. The gated attention networks (GaANs) from \cite{zhang2018gaan} replace the linear transformations of the RNN by graph attention networks (GANs) \cite{Velickovic18-GraphAttentionNetworks}, making for an architecture that is not convolutional and hence has different properties than GRNNs. Meanwhile, the architecture introduced in \cite{yu2017spatio} stacks GNNs and gated CNNs to learn spatiotemporal dependencies and is therefore not recurrent.
}
Other somewhat related works include the gated graph sequence neural networks \cite{li2015gated} and the recurrent formulation in \cite{ioannidis2018recurrent}. The architecture in \cite{li2015gated} learns sequential representations from graphs, but not from graph signals or processes. This is a fundamental difference since in learning from graphs the graph is seen as data, while in learning from graph signals the graph is given (i.e., a hyperparameter of the learning architecture). The work in \cite{ioannidis2018recurrent} uses recurrence as a means of re-introducing the input at every layer to capture multiple types of diffusion, but does not consider data consisting of temporal sequences.
\blue{We point out that neural network architectures designed to predict dynamic graphs {\cite{wu2020evonet}} or to process signals supported on them {\cite{tolstaya19-flocking,Li20-Planning}} are tangential to our work, as, coherent with the GSP working assumption for graph processes {\cite{girault2015translation,marques2017stationary}}, we only consider fixed graphs.}
Finally, while what we call \emph{time} and \emph{node gating} strategies have been used in \cite{li2017diffusion, zhang2018gaan, yu2017spatio, seo2018structured, li2015gated}, \blue{we not only introduce \emph{edge gating} but also provide an interpretation of all three gating strategies with respect to the underlying graph.}

The remaining sections of this paper are organized as follows. Section \ref{sec:RNN} goes over RNNs and graph signal processing. Following the introduction of the GRNN framework in Section \ref{sec:GRNN}, in Section \ref{sec:stability} we analyze GRNN stability and, in Section \ref{sec:gatedGRNN}, introduce Gated GRNNs. In Section \ref{sec:sims}, we evaluate the performance of all GRNN architectures, \blue{as well as the architectures from \cite{seo2018structured,li2017diffusion},} in a synthetic $k$-step prediction experiment and in \blue{three real-world experiments: earthquake epicenter estimation, traffic forecasting and epidemic tracking.} Concluding remarks are presented in Section \ref{sec:conclusions}.


\section{Recurrent Neural Networks and Graph Data} \label{sec:RNN} 



Instrumental to the design of GRNNs are the concept of traditional RNNs as models of sequential data and the theory of graph signal processing. \blue{We thus discuss how RNNs can be used to process sequential data} (Section~\ref{subsec:RNN}), and follow with an overview of graph signals, graph convolutions and graph processes (Section~\ref{subsec:graphData}). \blue{Note that the bias terms are omitted in Section \ref{subsec:RNN} to unburden the notation.}


\subsection{Recurrent Neural Networks (RNNs)} \label{subsec:RNN}

\blue{
    Let $\{\bbx_{t}\}_{t \in \mbN_{0}}$ be a sequence of $N$-dimensional data points $\bbx_{t} \in \reals^{N}$. A \emph{recurrent neural network} (RNN) learns to extract information from this sequence in the form of a \emph{hidden state} variable $\bbz_{t} \in \reals^{N}$. The states $\bbz_{t}$ are learned from the sequence $\{\bbx_t\}_{t \in \mbN_0}$ using a nonlinear map that takes the current data point $\bbx_{t}$ and the previous hidden state $\bbz_{t-1}$ as inputs, and outputs the updated hidden state $\bbz_{t}$. This map is parametrized as
\begin{equation} \label{eqn:RNNhidden}
\bbz_{t} = \sigma \left( \bbA \bbx_{t} + \bbB \bbz_{t-1} \right)
\end{equation}
where $\bbA \in \reals^{N \times N}$ and $\bbB \in \reals^{N \times N}$ are linear operators and $\sigma: \reals \to \reals$ is a pointwise nonlinearity, i.e. $[\sigma(\bbx)]_{i} = \sigma([\bbx]_{i})$, see \cite[Fig. 10.13]{Goodfellow16-DeepLearning} for a computational graph. We point out that, while it is not necessary for $\bbz_{t}$ and $\bbx_{t}$ to share the same dimensions (in most realizations of RNNs, they often do not), we assume so here for ease of exposition.
}

The sequence $\{\bbx_{t}\}$ is typically accompanied by a target representation $\ccalY$, which can be seen as a more appropriate representation of $\{\bbx_{t}\}$ for the task at hand. The elements of $\ccalY$ could be, e.g., a single value $\bby \in \ccalY$ to summarize information from the entire sequence, like a sentiment describing a tweet \cite{baziotis2017datastories}; or, they could be another sequence $\{\bby_{t}\}_{t \in \mbN_{0}}$, $\bby_{t} \in \reals^{M}$, which is the case in automatic speech recognition \cite{miao2015eesen}.
RNNs estimate $\ccalY$ by applying a second nonlinear map, $\bbPhi: \reals^{N} \to \reals^{M}$, to the hidden state. When the target representation is a sequence, this map is parametrized as
\begin{equation} \label{eqn:RNNoutput}
\bbPhi(\bbz_{t}) = \rho\left( \bbC \bbz_{t} \right)
\end{equation}
where $\bbC \in \reals^{M \times N}$ is the linear \emph{output} map and $\rho:\reals \to \reals$ is the pointwise nonlinearity used to compute the output, $[\rho(\bbx)]_{i} = \rho([\bbx]_{i})$. In cases where a single output value $\bby$ is associated with the sequence $\{\bbx_t\}$, we can estimate $\bby$ from the state at the end $T$ of the sequence, $\bbPhi(\bbz_{T}) = \rho\left( \bbC \bbz_{T} \right)$.

Given a training set $\{(\{\bbx_{t}\}, \ccalY)\}$ comprised of several sequences $\{\bbx_{t}\}$ and their associated representations $\ccalY$, the optimal linear maps $\bbA$, $\bbB$ and $\bbC$ are obtained by minimizing some loss function $\ccalL(\bbPhi(\bbz_{t}), \ccalY)$ (or $\ccalL(\bbPhi(\bbz_{T}), \ccalY)$) over the training set. This learning framework makes the hidden state adaptable to the task at hand, exploiting the available training examples to determine which pieces of sequential information are relevant to store in the hidden state $\bbz_{t}$. 

Key to the success of RNNs is the fact that the number of parameters (entries) in the linear operators $\bbA$, $\bbB$ and $\bbC$ \emph{do not} depend on the time index $t$. In other words, the same linear operators are applied throughout the entire sequence. This parameter-sharing scheme across the time-dimension has two main advantages: it keeps the number of parameters under control and, simultaneously, allows learning from sequences of variable length. This is consistent with \blue{our recurrent approximation model for the hidden state}, where each \emph{learned} state only depends on the current input and on the previous state. Regardless of the start time $t_0$, as long as the current value of the input and of the previous state are the same, the updated state will always be the same. 


\subsection{Graph data} \label{subsec:graphData}

In upcoming sections, our focus will be on adapting RNNs to process \emph{graph data}. Let $\ccalG = (\ccalV, \ccalE, \ccalW)$ be a graph, where $\ccalV = \{1,\ldots,N\}$ is the set of nodes, $\ccalE \subseteq \ccalN \times \ccalN$ is the set of edges and $\ccalW: \ccalE \to \reals$ is a weight function assigning proximity weights to the edges in $\ccalE$. We say that a data sample $\bbx$ is a \emph{graph signal} \cite{Sandryhaila13-DSPG, Shuman13-SPG} if its entries are related through the graph $\ccalG$. To be precise, each node $n \in \ccalV$ is assigned an entry $[\bbx]_{n} = x_{n}$, and the entries $x_{i}$ and $x_{j}$ are presumed related if there is an edge between them. The \emph{strength} of this relationship is usually measured by the edge weight. 

To provide a better interpretation of the relationship between the signal $\bbx$ and the graph $\ccalG$, we define the graph shift operator (GSO) $\bbS \in \reals^{N \times N}$ as a matrix that encodes the sparsity pattern of $\ccalG$ by requiring that entries $[\bbS]_{ij} = s_{ij}$ be nonzero only if $i=j$ or $(j,i) \in \ccalE$. The value $s_{ij}$ reflects the influence that the components of the signal at nodes $i$ and $j$ exert on one another. Examples of GSOs include the adjacency matrix \cite{Sandryhaila13-DSPG}, the Laplacian matrix \cite{Shuman13-SPG}, the random walk matrix \cite{Heimowitz17-MarkovGSP} and their normalizations. By construction, the GSO can be used to define $\bbS \bbx$ as the elementary linear and local map between graph signals. We say that this map is local because the $i$th entry of the output, $[\bbS \bbx]_{i}$, is a linear combination of the signal components in the one-hop neighborhood of $i$. Explicitly,
\begin{equation} \label{eqn:graphShift}
    [\bbS \bbx]_{i}
        = \sum_{j = 1}^{N} [\bbS]_{ij} [\bbx]_{j} 
        = \sum_{j \in \ccalN_{i}} s_{ij} x_{j} .
\end{equation}
where $\ccalN_{i} = \{j \in \ccalN: (j,i) \in \ccalE\}$ denotes the set of immediate neighbors of $i$.
The second equality follows from the fact that $[\bbS]_{ij} = 0$ for all $j \notin \ccalN_{i}$. In a sense, we can view the operation $\bbS \bbx$ as a shift (or diffusion) of the signal on the graph, where the value of the signal at each node is updated as a linear combination of signal values at neighboring nodes. 

The notion of \emph{graph shifts} can be used to define \emph{graph convolutions} analogous to \emph{time} convolutions. Formally, we define the graph convolution as a weighted sum of shifted versions of the signal \cite{Gama19-GraphConv, du2018graph},
\begin{equation} \label{eqn:graphConv}
    \bbA(\bbS) \bbx = \sum_{k=0}^{K-1} a_{k} \bbS^{k} \bbx.
\end{equation}
Note that, since $\bbS^{k} \bbx = \bbS (\bbS^{k-1} \bbx)$, repeated applications of the linear map $\bbS$ entail successive exchanges with neighboring nodes; this means that the graph convolution can be computed as a series of local operations. Also note that the operation $\bbS^{k} \bbx$  produces a summary of the information contained in the $k$-hop neighborhood of each node. For $0 \leq k \leq K-1$, the filter \emph{coefficients} (or filter \emph{taps}) $\bba = [a_{0},a_{1},\ldots,a_{K-1}] \in \reals^{K}$ assign different importances to the information located in each $k$-hop neighborhood. Following the graph signal processing terminology, $\bbA(\bbS) \in \reals^{N \times N}$ is called a \emph{linear shift-invariant graph filter} (LSI-GF) \cite{Segarra17-Linear}, which reinforces the analogy with time-invariant filters and the convolution operation.

In what follows, the data sequences that we consider will be \emph{graph processes}. A graph process is a sequence $\{\bbx_{t}\}_{t \in \mbN_{0}}$ of signals $\bbx_t \in \reals^N$ supported on the graph $\ccalG$.  Alternatively, a graph process can also be seen as a time-varying graph signal where the values of the signal at each node change over time \cite{Gama19-GLLN, Gama19-Control}. While traditional RNNs (Section~\ref{subsec:RNN}) successfully exploit the sequential structure of data, they fail to account for other structures that may be present in $\bbx_t$; however, as substantiated by the remarkable performance achieved by CNNs \cite{lecun15-deeplearning, kuo17-recos, yosinski2014transferable} and GNNs \cite{Kipf17-ClassifGCN, Defferrard17-CNNGraphs, Gama19-Architectures}, exploiting the data's \emph{spatial} structure is of paramount importance. 


\section{Graph Recurrent Neural Networks} \label{sec:GRNN}



%

\begin{figure*}[t]
	\centering
	\includegraphics[width=0.86\textwidth]{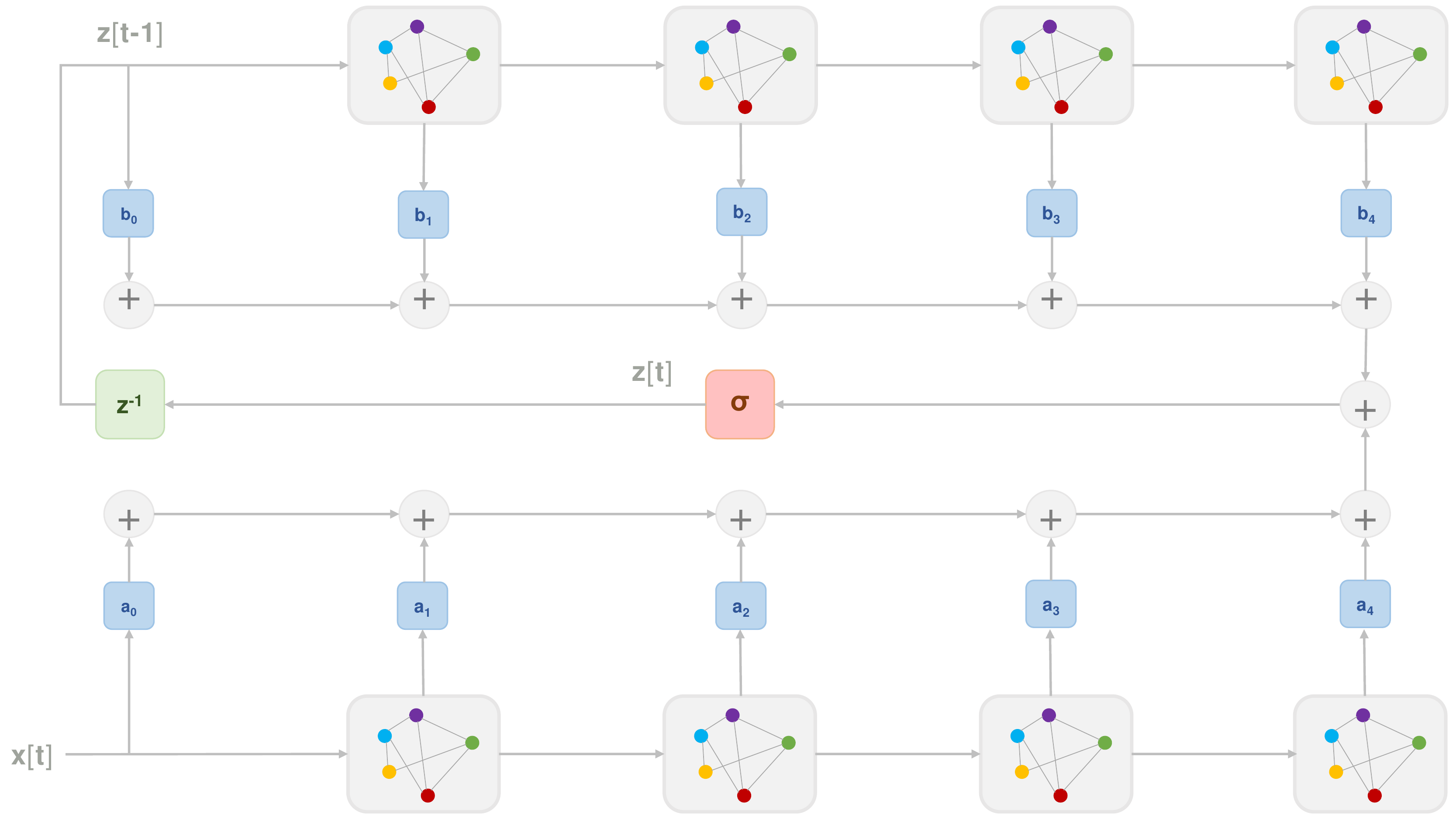} \vspace{1em}
	\caption{State computation in a graph recurrent neural network with $K=5$. Gray blocks with graphs on the inside stand for graph shifts, blue blocks for linear weights, the red block for a pointwise nonlinearity and the green block for a time delay. } 
	\label{fig:grnn}
\end{figure*}	

As previously noted in Section~\ref{subsec:RNN}, RNNs \blue{are systems which exploit recurrence to learn} dependencies in sequences of variable length with a number of parameters that is independent of time. However, the number of parameters still depends on the dimension $N$, which not only prevents RNNs from scaling to inputs with large dimensions but, more importantly, hinders their ability to account for other structures inherent to the data. Accounting for structure is desirable first because when we parametrize operations in terms of the structure of the data, we are effectively adding a constraint to the optimization problem, shrinking the feasible set and making it easier to find close-to-optimal solutions; and second, because it allows us to leverage repeating and/or symmetrical motifs in the data to extract shared features and simplify model parametrization.
In the case of graph processes, we will thus adapt the operations performed by RNNs to take the graph structure into account. We assume that the state $\bbz_{t} \in \reals^{N}$ is itself a graph signal, so that each entry $[\bbz_{t}]_{n}$ is a \emph{nodal} hidden state. The updated state can then be calculated by parametrizing the linear maps $\bbA$ and $\bbB$ by the graph shift operator $\bbS$, yielding
\begin{equation} \label{eqn:GRNNhidden}
    \bbz_{t} = \sigma \big( \bbA(\bbS) \bbx_{t} + \bbB(\bbS) \bbz_{t-1} \big)
\end{equation}
\blue{where we have omitted the bias term to unburden the notation.}
We call this generic architecture the \emph{graph recurrent neural network} (GRNN). \blue{Note that the computational graph for obtaining \eqref{eqn:GRNNhidden} is analogous to the one in \cite[Fig. 10.13]{Goodfellow16-DeepLearning} but replacing the linear transforms by graph filters.} Although $\bbA(\bbS)$ and $\bbB(\bbS)$ can be arbitrary functions of $\bbS$ \cite{isufi19-edgenets}, we opt for the graph convolution, \eqref{eqn:graphConv} so that there are only $K$ parameters to learn for each filter ($\bba= [a_{0},\ldots,a_{K-1}] \in \reals^K$ and $\bbb = [b_{0},\ldots,b_{K-1}] \in \reals^K$). This allows computations to be done locally and, like in GNNs, ensures that the number of parameters is independent of the size of the graph. Other advantages of graph convolutions are that they are permutation equivariant and stable to graph perturbations \cite{Gama19-Stability}, a fact that we use to derive a stability result for GRNNs in Section \ref{sec:stability}.

To estimate the target representation $\ccalY$, we can once again leverage the fact that the hidden state $\bbz_{t}$ is a graph signal and use a GNN $\bbPhi(\bbz_{t};\bbS)$ \cite{Gama19-Architectures} to compute the estimate $\hat{\ccalY}$. In cases where $\bby_{t}$ is itself a graph signal (e.g., in regression or forecasting), $\bbPhi$ can be a simple one-layer graph filter followed by an activation function $\rho$,
\begin{equation} \label{eqn:GRNNoutput}
    \hby_{t} = \rho \left( \bbC(\bbS) \bbz_{t} \right)
\end{equation}
where we parametrize the filter $\bbC(\bbS)$ as a graph convolution \eqref{eqn:graphConv} with $K$ filter taps ($\bbc \in \reals^K$) to make sure that the number of parameters of the architecture \eqref{eqn:GRNNhidden}-\eqref{eqn:GRNNoutput} is independent of the size of the graph. 
If $\bby_t$ has dimension $M \neq N$ then $\bbC(\bbS)$ must be followed by an additional operation mapping $N$ dimensions to $M$ dimensions (a fully connected layer---perceptron---, for instance), in which case the number of parameters of $\bbC(\bbS)$ will necessarily depend on $N$ and $M$. Finally, if $\bby_t$ is a single value $\bby$ representing the entire sequence $\{\bbx_t\}_{t=1}^T$, we compute it from the last state alone as $\hby = \rho(\bbC(\bbS) \bbz_{T})$.

Making the hidden state $\bbz_{t}$ a graph signal has several advantages. First, it adds interpretability to the value of this signal with respect to the underlying graph support. For instance, we could analyze the frequency content of the hidden state and compare it with the frequency content of the graph process $\bbx_{t}$. Second, it allows the computation of $\bbz_{t}$ to be done in an entirely local fashion, involving only repeated exchanges with the one-hop neighbors of each node. Making $\bbz_{t}$ a graph signal, however, also implies that we can no longer tune the size of the hidden state which is now fixed at $N$. The size of the hidden state is a fundamental hyperparameter in the design of RNNs since it controls the description capability of the \blue{hidden state}. This can be overcome by introducing graph signal tensors where, instead of a single scalar, a vector of features is assigned to each node.

A graph signal tensor is a function \blue{$\ccalX: \ccalV \to \reals^{F}$} that assigns a vector of dimension $F$ to each node. Each entry of this vector is a \emph{feature}. The signal tensor can be represented as a $N \times F$ matrix $\bbX$, where each column $\bbx^{f} \in \reals^{N}$ is a graph signal corresponding to the values of feature $f$ in all nodes.

The graph convolution operation \eqref{eqn:graphConv} must be extended accordingly, so as to carry out a local, linear transformation mapping the $F$ input features in $\bbX \in \reals^{N \times F}$ to the $G$ output features $\bbY \in \reals^{N \times G}$. This map is implemented by a bank of $FG$ graph filters of order $K$, with filter taps given by $\bba^{fg} = [a_{0}^{fg},\ldots,a_{K-1}^{fg}]$. The graph convolution $\ccalA_{\bbS}: \reals^{N \times F} \to \reals^{N \times G}$ becomes [cf. \eqref{eqn:graphConv}],
\begin{equation} \label{eqn:graphConvExtended}
    \bbY = \ccalA_{\bbS}(\bbX) = \sum_{k=0}^{K-1} \bbS^{k} \bbX \bbA_{k}
\end{equation}
where $\bbA_{k} \in \reals^{F \times G}$ is a matrix satisfying $[\bbA_{k}]_{fg} = a_{k}^{fg}$. We can see that, for the convolution to be local on the graph, the operations that modify $\bbX$ on the left have to respect the sparsity of the graph, while those that modify it on the right can be arbitrary linear operations. The right operations have the role of mixing the features within a single node, using the same linear combination --parameter sharing-- across all nodes.

Given a sequence of graph signal tensors $\{\bbX_{t}\}$, $\bbX_{t} \in \reals^{N \times F}$, we can rewrite equation \eqref{eqn:GRNNhidden} to obtain $H$-feature hidden state tensors $\bbZ_{t} \in \reals^{N \times H}$,
\begin{equation} \label{eqn:GRNNhiddenExt}
    \bbZ_{t} = \sigma \bigg( \ccalA_{\bbS}(\bbX_{t}) + \ccalB_{\bbS}(\bbZ_{t-1}) \bigg)
\end{equation}
where the filter taps are $\bbA_{k} \in \reals^{F \times H}$ and $\bbB_{k} \in \reals^{H \times H}$, $k=0, \ldots, K-1$. Assuming that the target representation is also a graph signal tensor, it can be generalized as $\bbY_{t} \in \reals^{Y \times G}$, which we calculate as
\begin{equation} \label{eqn:GRNNoutputExt}
\bbY_{t} = \rho \bigg( \ccalC_{\bbS}(\bbZ_{t}) \bigg)
\end{equation}
with filter taps $\bbC_{k} \in \reals^{G \times H}$, $k=0, \ldots, K-1$.

By using graph signal tensors to describe the hidden state, we retrieve the ability to tune its descriptive power through the value of $H$. Additionally, note that the output $\bbY_{t}$ can be computed straight from the individual hidden state feature values at each node, minimizing the communication cost. To achieve this, it suffices to make $\ccalC_{S}$ a graph convolution \eqref{eqn:graphConvExtended} with $K=1$, in which case no neighborhood exchanges take place. This architecture only requires node communications at updates of the hidden state $\bbZ_{t}$.


\section{Stability of GRNNs} \label{sec:stability}



The performance of GRNNs (and of graph filters in general) depends on the underlying graph support. If the graph changes, or if it is not estimated accurately, the output of the GRNN can be different than expected. In what follows, we obtain an upper bound on the changes at the output of a GRNN caused by perturbations of the underlying graph. We use this result to quantify how \textit{adaptable} GRNNs are to time-varying scenarios and transfer learning \cite{gama19-neurips}. We focus on single-feature GRNNs \eqref{eqn:GRNNhidden}-\eqref{eqn:GRNNoutput} for simplicity, but results for the multi-feature case carry out similarly \cite{Gama19-Stability}.

Let $\bbS$ be the GSO of a given graph, and let $\tbS$ be the GSO of the graph resulting from a perturbation of this graph. Let us first consider the case of node relabelings, in which $\tbS = \bbP^{\Tr} \bbS \bbP$. The matrix $\bbP$ is a permutation matrix $\bbP \in \ccalP$ with
\begin{equation}
	\ccalP = \big\{ \bbP \in \{0,1\}^{N \times N} : \bbP \bbone = \bbone , \bbP^{\Tr} \bbone = \bbone \big\}.
\end{equation}
If the perturbed graph is simply a permutation of the original graph, then the output of the GRNN running on the permuted graph is the permutation of the output of the GRNN running on the original graph.

%
\begin{proposition} \label{prop:permutationEquivariance}
Let $\bbS$ be a GSO and $\tbS = \bbP^{\Tr} \bbS \bbP$ be a permutation of this GSO, for some permutation matrix $\bbP \in \ccalP$. Let $\bbx_{t}$ be a graph signal and $\tbx_{t} = \bbP^{\Tr} \bbx_{t}$ the permuted version of the signal. Then, it holds that
\begin{align}
	\tbz_{t} &
		= \sigma(\bbA(\tbS) \tbx_{t} + \bbB(\tbS) \tbz_{t-1}) 
		= \bbP^{\Tr} \bbz_{t} \\
	\tby_{t} &
		= \rho (\bbC(\tbS) \tbz_{t}) = \bbP^{\Tr} \bby_{t} \quad \mbox{for all } t.
\end{align}
\end{proposition}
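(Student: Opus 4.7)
The plan is to prove the proposition by induction on $t$, with the key observation being that graph convolutions parametrized as polynomials in the GSO are permutation equivariant, and that pointwise nonlinearities commute with permutations.

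The preliminary step is to establish the behavior of polynomials in $\bbS$ under the relabeling. Since any permutation matrix $\bbP \in \ccalP$ satisfies $\bbP \bbP^{\Tr} = \bbI$, a telescoping argument yields $\tbS^{k} = (\bbP^{\Tr} \bbS \bbP)^{k} = \bbP^{\Tr} \bbS^{k} \bbP$ for every $k \geq 0$. Taking the weighted sum over $k = 0, \ldots, K-1$ gives
\begin{equation}
    \bbA(\tbS) = \sum_{k=0}^{K-1} a_{k} \tbS^{k} = \bbP^{\Tr} \bbA(\bbS) \bbP,
\end{equation}
and the analogous identities hold for $\bbB(\tbS)$ and $\bbC(\tbS)$.

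Next, I carry out the induction on $t$. For the base case I take the customary initialization $\bbz_{-1} = \tbz_{-1} = \bb0$, which is trivially permutation equivariant since $\bbP^{\Tr} \bb0 = \bb0$. For the inductive step, assume $\tbz_{t-1} = \bbP^{\Tr} \bbz_{t-1}$. Using $\bbP \bbP^{\Tr} = \bbI$ twice,
\begin{align}
    \bbA(\tbS) \tbx_{t} + \bbB(\tbS) \tbz_{t-1}
        & = \bbP^{\Tr} \bbA(\bbS) \bbP \bbP^{\Tr} \bbx_{t} + \bbP^{\Tr} \bbB(\bbS) \bbP \bbP^{\Tr} \bbz_{t-1} \notag \\
        & = \bbP^{\Tr} \bigl( \bbA(\bbS) \bbx_{t} + \bbB(\bbS) \bbz_{t-1} \bigr).
\end{align}
Since $\sigma$ is pointwise, it satisfies $\sigma(\bbP^{\Tr} \bbu) = \bbP^{\Tr} \sigma(\bbu)$ for any vector $\bbu$ (permutations merely reorder entries, whereas $\sigma$ acts entrywise), so applying $\sigma$ to both sides of the previous display gives $\tbz_{t} = \bbP^{\Tr} \bbz_{t}$, completing the induction. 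The same pointwise argument applied to $\rho$ together with $\bbC(\tbS) = \bbP^{\Tr} \bbC(\bbS) \bbP$ then yields $\tby_{t} = \rho(\bbC(\tbS) \tbz_{t}) = \bbP^{\Tr} \rho(\bbC(\bbS) \bbz_{t}) = \bbP^{\Tr} \bby_{t}$.

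I do not anticipate a real obstacle; the argument is essentially a chain of algebraic identities. The only point requiring care is being explicit about the base case of the induction (justifying permutation equivariance of the initial hidden state) and about why a pointwise nonlinearity commutes with a permutation, which should be stated once and invoked for both $\sigma$ and $\rho$.
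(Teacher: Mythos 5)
Your proof is correct and follows essentially the same route as the paper's: establish $\tbS^k = \bbP^{\Tr}\bbS^k\bbP$, conclude $\bbA(\tbS)=\bbP^{\Tr}\bbA(\bbS)\bbP$ (and likewise for $\bbB$, $\bbC$), and use the fact that pointwise nonlinearities commute with permutations. The only difference is that you make the induction on $t$ and its base case explicit, whereas the paper leaves the recursion on $\tbz_{t-1}$ implicit; this is a minor improvement in rigor, not a different argument.
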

%
\begin{proof}
Refer to Appendix \ref{sec:appendixA}.
\end{proof}

\noindent Proposition~\ref{prop:permutationEquivariance} states that GRNNs are independent of any chosen node labeling. Note that this result holds irrespective of whether we know the value of $\bbP$ or not. It also indicates that GRNNs are able to exploit the internal symmetries of graph processes in the course of learning. This means that by learning how to process a signal on a given part of the graph, GRNNs are also learning to process it in all other parts of the graph that are topologically symmetric. The permutation equivariance property can thus be seen as an implicit mechanism of \emph{data augmentation}.

When considering more general perturbations $\tbS \in\reals^{N \times N}$, Proposition~\ref{prop:permutationEquivariance} suggests that distances should be measured modulo permutations. In order to do so, we introduce the notion of relative perturbation in Definition \ref{def:errorSet}.
%
\begin{definition}[Relative perturbation matrices] \label{def:errorSet}
Given GSOs $\bbS$ and $\tbS$, we define the set of relative perturbation matrices modulo permutation as
\begin{equation} \label{eqn:errorSet}
	\ccalE(\bbS,  \tbS) = \left\{ \bbE \! \in \! \reals^{N \times N} : \bbP^{\Tr} \tbS \bbP = \bbS + \bbE \bbS + \bbS \bbE^{\Tr} , \bbP \in \ccalP \right\}.
\end{equation}
\end{definition}
\noindent We define the distance between two graphs described by $\bbS$ and $\tbS$ respectively as
\begin{equation} \label{eqn:graphDistance}
	d(\bbS, \tbS) = \min_{\bbE \in \ccalE(\bbS, \hbS)} \| \bbE \|.
\end{equation}
Notice that if $\tbS$ is a permutation of $\bbS$, then $d(\bbS, \tbS) = 0$.

To understand the effect of graph perturbations on the output of GRNNs, we first look at their effect on graph convolutions \eqref{eqn:graphConv}. In particular, we leverage the graph Fourier transform (GFT) \cite{Sandryhaila14-DSPGfreq} to analyze the spectral representations of the convolved graph signals. Let $\bbS = \bbV \bbLambda \bbV^{\Hr}$ be the eigendecomposition of the GSO $\bbS$, where $\bbV = [\bbv_{1},\ldots,\bbv_{N}]$ is the orthogonal matrix of eigenvectors, and $\bbLambda = \diag(\lambda_{1},\ldots,\lambda_{N})$ is the diagonal matrix of eigenvalues $\lambda_{n}$. The GFT of a signal is computed by projecting the signal on the graph's eigenvector basis. The GFT of a convolved graph signal is thus
\begin{equation} \label{eqn:graphConvGFT}
	\bbV^{\Hr} \bbA(\bbS) \bbx = \bbV^{\Hr} \sum_{k=0}^{K-1} a_{k} \bbV \bbLambda^{k} \bbV^{\Hr} \bbx = \bbA(\bbLambda) \left( \bbV^{\Hr} \bbx \right)
\end{equation}
where $\bbA(\bbLambda)$ is a diagonal matrix such that $[\bbA(\bbLambda)]_{n} = \sum_{k=0}^{K-1} a_{k} \lambda_{n}^{k} = a(\lambda_{n})$. We refer to the function $a(\lambda)$ as the \emph{frequency response} of the filter, given by
\begin{equation} \label{eqn:freqResponse}
	a(\lambda) = \sum_{k=0}^{K-1} a_{k} \lambda^{k}.
\end{equation}
For a given graph, this frequency response gets instantiated on the graph's eigenvalues $\{\lambda_{n}\}$ as $\bbA(\bbLambda)$, determining the specific effect that the filter has on the input due to the underlying support \eqref{eqn:graphConvGFT}. Note, however, that the general expression of the frequency response \eqref{eqn:freqResponse} only depends on the filter taps $\{a_{k}\}$, which are independent of the graph.

The results here derived are for graph filters with \textit{integral Lipschitz} frequency response.
%
\begin{definition}[Integral Lipschitz filters] \label{def:integralLipschitz}
	Given a set of filter taps $\{a_{k}\}$, we say that the filter $\bbA(\bbS)$ [cf. \eqref{eqn:graphConv}] is integral Lipschitz if \blue{there exists $C$ such that} its frequency response $a(\lambda)$ [cf. \eqref{eqn:freqResponse}] satisfies
	\begin{equation} \label{eqn:integralLipschitz}
		|a(\lambda_{2}) - a(\lambda_{1}) | \leq C \frac{|\lambda_{2}-\lambda_{1}|}{|\lambda_{1}+\lambda_{2}|/2}
	\end{equation}
	for all $\lambda_{1},\lambda_{2} \in \reals$.
\end{definition}
%
\noindent Integral Lipschitz filters also satisfy $|\lambda a'(\lambda)| \leq C$, where $a'(\lambda)$ is the derivative of $a(\lambda)$. This condition is reminiscent of the scale invariance of wavelet transforms \cite[Chapter 7]{Daubechies92-Wavelets}.

\blue{Under the following assumptions, we prove that GRNNs built from integral Lipschitz filters are stable to relative perturbations in Theorem \ref{thm:stability}}.

\blue{
\begin{assumption} \label{as1}
The filters $\bbA$, $\bbB$ and $\bbC$ of the GRNN \eqref{eqn:GRNNhidden}-\eqref{eqn:GRNNoutput} are integral Lipschitz [cf. \eqref{eqn:integralLipschitz}] with constants $C_{\bbA}$, $C_{\bbB}$ and $C_{\bbC}$ and normalized filter height $\|\bbA\|=\|\bbB\|=\|\bbC\|=1$, respectively.
\end{assumption}

\begin{assumption}\label{as2}
The pointwise nonlinearities $\sigma$ and $\rho$ \eqref{eqn:GRNNhidden}-\eqref{eqn:GRNNoutput} are normalized Lipschitz, i.e. $|\sigma(b) - \sigma(a)| \leq |b-a|$ for all $a,b \in \reals$, and satisfy $\sigma(0)=\rho(0)=0$.
\end{assumption}

\begin{assumption} \label{as3}
The initial hidden state is identically zero, i.e. $\bbz_0 = \boldsymbol{0}$.
\end{assumption}

\begin{assumption} \label{as4}
The inputs $\bbx_t$ satisfy $\|\bbx_t\|\leq\|\bbx\|=1$ for every $t$.
\end{assumption}
}

%
\begin{theorem}[Stability of GRNNs] \label{thm:stability}
	\blue{Consider two graphs with $N$ nodes represented by the GSOs $\bbS = \bbV \bbLambda \bbV^{\Hr}$ and $\tbS$.}
	 Let $\bbE = \bbU \bbM \bbU^{\Hr} \in \ccalE(\bbS, \tbS)$ be a relative perturbation matrix [cf. \eqref{eqn:errorSet}] such that [cf. \eqref{eqn:graphDistance}]
	\begin{equation} \label{eqn:distanceCondition}
		d(\bbS, \tbS) \leq \| \bbE \| \leq \varepsilon.
	\end{equation}
	\blue{Let $\bby_t$ and $\tby_t$ be the outputs of GRNNs \eqref{eqn:GRNNhidden}-\eqref{eqn:GRNNoutput} running on $\bbS$ and $\tbS$ respectively, and satisfying AS\ref{as1} through AS\ref{as4}.}
	 Then, it holds that
	\begin{equation} \label{eqn:stability}
		\min_{\bbP \in \ccalP} \| \bby_{t} - \bbP^{\Tr} \tby_{t} \| \leq C(1+\sqrt{N} \delta)(t^2+3t) \varepsilon\ + \ccalO(\varepsilon^{2})
	\end{equation}
	\blue{where $C$ is the maximum filter constant,
	 \[C = \max\{C_{\bbA},C_{\bbB},C_{\bbC}\}\]}
	 and $\delta = (\|\bbU-\bbV\| + 1)^{2} - 1$ measures the eigenvector misalignment between the GSO $\bbS$ and the error matrix $\bbE$.
\end{theorem}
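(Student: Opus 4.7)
The plan is to reduce the problem to a no-permutation setting via Proposition~\ref{prop:permutationEquivariance}, then use the known stability of integral Lipschitz graph convolutions as a single-step building block, and finally unroll the resulting recursion for the hidden state difference. First, picking $\bbP\in\ccalP$ to attain the minimum in \eqref{eqn:stability} and absorbing it into $\tbS$, we may assume without loss of generality that $\tbS = \bbS + \bbE\bbS + \bbS\bbE^{\Tr}$ with $\|\bbE\|\leq\varepsilon$. The problem then reduces to bounding $\|\bby_t - \tby_t\|$ with both GRNNs driven by the same input sequence $\{\bbx_s\}_{s\leq t}$.

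The central building block is the stability of integral Lipschitz graph convolutions established in \cite{Gama19-Stability}: for any signal $\bbx$ with $\|\bbx\|\leq 1$ and any filter $\bbF\in\{\bbA,\bbB,\bbC\}$ satisfying AS\ref{as1},
\begin{equation*}
\|(\bbF(\bbS)-\bbF(\tbS))\bbx\| \leq C_{\bbF}(1+\sqrt{N}\delta)\varepsilon + \ccalO(\varepsilon^2).
\end{equation*}
Before applying this to the GRNN recursion, I would establish a linear-in-$t$ bound on the hidden state norm. Using AS\ref{as1}--AS\ref{as4}, since $\sigma$ is $1$-Lipschitz with $\sigma(0)=0$ and the filter operator norms equal unity, the state obeys $\|\bbz_t\|\leq \|\bbx_t\|+\|\bbz_{t-1}\|\leq 1 + \|\bbz_{t-1}\|$, which telescopes to $\|\bbz_t\|\leq t$ starting from $\bbz_0=\bbzero$.

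Next, letting $\eta_t = \|\bbz_t - \tbz_t\|$, a cross-term decomposition inside the state recursion combined with the normalized Lipschitzness of $\sigma$ yields
\begin{equation*}
\eta_t \leq \|(\bbA(\bbS){-}\bbA(\tbS))\bbx_t\| + \|(\bbB(\bbS){-}\bbB(\tbS))\bbz_{t-1}\| + \|\bbB(\tbS)\|\,\eta_{t-1}.
\end{equation*}
The first summand is $\ccalO(\varepsilon)$, the second is $\ccalO(t\varepsilon)$ by the linear state bound, and $\|\bbB(\tbS)\|=1+\ccalO(\varepsilon)$, so combined with $\eta_{t-1}=\ccalO(\varepsilon)$ the last contribution is $\eta_{t-1}+\ccalO(\varepsilon^2)$. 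Unrolling from $\eta_0 = 0$ produces a sum $\sum_{s=1}^{t} s$, giving a quadratic-in-$t$ state-difference bound. Applying the analogous decomposition to the output filter, $\|\bby_t - \tby_t\|\leq\|(\bbC(\bbS){-}\bbC(\tbS))\bbz_t\|+\|\bbC(\tbS)\|\,\eta_t$, adds one further linear-in-$t$ contribution from the $\|\bbz_t\|\leq t$ bound, producing the $(t^2+3t)$ scaling with $C=\max\{C_\bbA,C_\bbB,C_\bbC\}$.

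The main obstacle will be bookkeeping the $\ccalO(\varepsilon^2)$ residuals carefully. Each step of the unrolling introduces a correction $\|\bbB(\tbS)\|\,\eta_{t-1}-\eta_{t-1}=\ccalO(\varepsilon)\cdot\eta_{t-1}=\ccalO(\varepsilon^2)$, and one must verify that these accumulated corrections remain absorbed in a single uniform $\ccalO(\varepsilon^2)$ rather than inflating to $\ccalO(t\varepsilon^2)$; this ultimately hinges on the fact that the hidden-state norm itself grows only linearly. A secondary subtlety is that the filter stability inequality in \cite{Gama19-Stability} is typically stated on a per-signal, unit-norm basis, so one must check that the $(1+\sqrt{N}\delta)$ eigenvector-misalignment factor and the constants $C_{\bbF}$ transfer intact to the GRNN setting---in particular when the signal passed through $\bbB(\cdot)$ is the growing state $\bbz_{t-1}$ rather than a unit-norm input, which is handled cleanly by normalizing and pulling out the $\|\bbz_{t-1}\|$ factor.
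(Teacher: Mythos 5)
Your proposal follows essentially the same route as the paper's proof: reduce to $\bbP=\bbI$, invoke the integral-Lipschitz filter perturbation bound of \cite{Gama19-Stability} as the single-step building block, establish the linear state-norm bound $\|\bbz_t\|\leq t$, and unroll the state-difference recursion before composing with the output filter (the paper adds and subtracts $\bbB(\bbS)\tbz_{t-1}$ where you add and subtract $\bbB(\tbS)\bbz_{t-1}$, but this is immaterial). The one bookkeeping discrepancy is that the cited filter bound actually carries a factor of $2$, i.e. $\|\bbF(\bbS)-\bbF(\tbS)\|\leq 2C_{\bbF}(1+\delta\sqrt{N})\varepsilon+\ccalO(\varepsilon^2)$, which is precisely what turns the sum $t+t(t-1)/2+t$ into the stated constant $(t^2+3t)$ rather than the $(t^2+3t)/2$ your stated building block would yield.
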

%
\begin{proof}
Refer to Appendix \ref{sec:appendixB}.
\end{proof}

Theorem \ref{thm:stability} states that, for a graph process of length $t=T$, the output of a GRNN is Lipschitz stable to relative graph perturbations [cf. Def.~\ref{def:errorSet}] with constant $C(1+\sqrt{N} \delta)(T^2+3T)$. We see that the stability of a GRNN depends on the Lipschitz filter constant $C$. While this is a \textit{design} parameter that could be set at a fixed value, it is usually learned from data through the filter taps of $\bbA$, $\bbB$ and $\bbC$. The term $(1 + \delta\sqrt{N})$ measures the eigenvector misalignment and is a property of the graph perturbation. Unlike $C$, it cannot be controlled by design. Finally, the stability of GRNNs depends polynomially on the length $T$ of the process, with $T^2 + 3T$. The linear term arises from sequential applications of the filters $\bbA$, $\bbB$ and $\bbC$, and the square term is a result of the recurrence on $\bbz_t$. We note that $T$ can be controlled by restraining the length of the graph processes that we consider, or by splitting them in multiple shorter processes.



\section{Gated GRNN architectures} \label{sec:gatedGRNN}



One problem that can arise from long sequences is that of \emph{vanishing (exploding) gradients}.
Traditional RNN architectures suffer from this problem when the input sequence contains long term dependencies \cite{pascanu2013difficulty, bengio1994learning}. The same holds for GRNNs when the eigenvalues of $\bbB(\bbS)$ are smaller (or larger) than $1$. 
RNN architectures typically address problems associated with long term dependencies by the addition of time gating mechanisms \cite[Chapter 10]{Goodfellow16-DeepLearning}, which can be naturally extended to GRNNs (Sec.~\ref{subsec:timeGating}). 

When dealing with graph processes, we may also encounter what we call the problem of \textit{vanishing gradients in space} (in contrast with the aforementioned problem of \textit{vanishing gradients in time}). Even if the eigenvalues of $\bbB(\bbS)$ are well-behaved, some nodes or paths of the graph might get assigned more importance than others in long range exchanges, leading to \emph{spatial imbalances} that make it challenging to encode certain graph spatial dependencies. This problem can be explained by the fact that the matrix multiplications by $\bbB(\bbS)$ are actually multiplications by powers of $\bbS$. As an example, consider a graph where some components have higher connectivities than others. For large $t$, the matrix entries associated with nodes belonging to highly connected components will get densely populated, overshadowing other local, sparser structures of these components and making it harder to distinguish long range processes that are local on the graph. \blue{While in GNNs long range graph dependencies have been addressed by the use of Lanczos filtering methods \cite{susnjara2015accelerated,liao2019lanczosnet}, or by computing the spectral response of graph convolutions in Krylov form \cite{luan2019break}, these techniques require calculating a basis that depends on both the graph and the signal, and, as such, are costly to translate to GRNNs dealing with signals that change over time.}

To attenuate these issues, we propose to add a more comprehensive \textit{gating mechanism} to GRNNs. Similarly to the gates employed in traditional RNNs, the gates that we consider are operators acting on the current input and previous state to control how much of the input should be taken into account and how much past information should be \emph{remembered} (or \emph{forgotten}) in the computation of the new state. These gating operators are updated at every step of the sequence and, as such, they are able to create multiple dependency paths between states and inputs in both time and space. This allows for both short and long term dependencies to be encoded by the model without getting assigned exponentially smaller or larger weights. Adding gating to GRNNs yields the Gated GRNN (GGRNN), in which $\bbZ_t$ is computed as
\begin{equation} \label{eqn:gatingGeneric}
    \bbZ_{t} = \sigma \bigg( \hcalQ \left\{ \ccalA_{\bbS}(\bbX_{t}) \right\} + \kcalQ \left\{ \ccalB_{\bbS} (\bbZ_{t-1} ) \right\} \bigg)
\end{equation}
and where $\hcalQ: \reals^{N \times H} \to \reals^{N \times H}$ stands for the \textit{input gate} operator and $\kcalQ: \reals^{N \times H} \to \reals^{N \times H}$ for the \textit{forget gate} operator.

Depending on the choice of gating strategy, which we will discuss in the following subsections, $\hat{\ccalQ}$ and $\check{\ccalQ}$ take on different forms. What they all have in common is that their parameters are themselves calculated as the output of GRNNs. The GRNN used to calculate the input gate has \textit{input gate state} $\hbZ_t \in \reals^{N \times \hhatH}$ given by 
\begin{equation} \label{eqn:inputgatestate}
\hbZ_{t} = \hhatsigma \bigg( \hcalA_{\bbS} (\bbX_{t})  + \hcalB_{\bbS} (\hbZ_{t-1}) \bigg)
\end{equation}
and the GRNN used to calculate the forget gate has \textit{forget gate state} $\check{\bbZ}_t \in \reals^{N \times \chkH}$,
\begin{equation} \label{eqn:forgetgatestate}
\cbZ_{t} = \check{\sigma} \bigg( \kcalA_{\bbS} (\bbX_{t})  + \kcalB_{\bbS} (\cbZ_{t-1}) \bigg)
\end{equation}
where $\hcalA_{\bbS}$, $\hcalB_{\bbS}$, $\kcalA_{\bbS}$ and $\kcalB_{\bbS}$ are graph convolutions [cf. \eqref{eqn:graphConvExtended}] with filter taps $\hbA_{k} \in \reals^{F \times \hhatH}$, $\hbB_{k} \in \reals^{\hhatH \times \hhatH}$, $\cbA_{k} \in \reals^{F \times \chkH}$ and $\cbB_{k} \in \reals^{\chkH \times \chkH}$.

To tackle the different time and spatial imbalance scenarios described in this section, we envision three gating strategies: time (Sec.~\ref{subsec:timeGating}), node (Sec.~\ref{subsec:nodeGating}) and edge gating (Sec.~\ref{subsec:edgeGating}).

\subsection{Time gating} \label{subsec:timeGating}

In the \textit{Time Gated GRNN} (t-GGRNN), the input and forget gate operators $\hcalQ$ and $\kcalQ$ take form
\begin{equation} \label{eqn:tGGRNN}
\begin{aligned}
    \hcalQ \left\{ \ccalA_{\bbS}(\bbX_{t}) \right\} & = \hhatq_{t}  \ccalA_{\bbS}(\bbX_{t}) \\
    \kcalQ \left\{ \ccalB_{\bbS}(\bbZ_{t}) \right\} & = \chkq_{t}  \ccalB_{\bbS}(\bbZ_{t})
\end{aligned}
\end{equation}
with $\hhatq_{t} \in [0,1]$ and $\chkq_{t} \in [0,1]$ computed as
\begin{equation} \label{eqn:tgates_comp}
\begin{aligned}
    \hhatq_{t} = \mathrm{sigmoid} (\hbc^{\Tr} \mathrm{vec}(\hbZ_{t})) \\
    \chkq_{t} = \mathrm{sigmoid} (\cbc^{\Tr} \mathrm{vec}(\check{\bbZ}_{t})) \\
\end{aligned}
\end{equation}
and where $\hbc \in \reals^{\hhatH N}$ and $\cbc \in \reals^{\chkH N}$ are learnable parameters. 

Time gating addresses the problem of vanishing gradients in time by learning scalar gates between $0$ and $1$ and multiplying the input and state variables by these gates, thus compensating for imbalanced gradient paths associated with eigenvalues that are too small or too large. We refer to this strategy as time gating because it only acts on time dependencies, \emph{shutting down} the whole input and/or the whole previous state at each time instant as needed, without discriminating between nodes. Here, note that the number of parameters necessary to map the state to the input and forget gates are dependent on the size of the graph, because all of the graph signal components must be mapped onto scalar variables. 

The basic architecture of a t-GGRNN resembles that of the Long Short-Term Memory units (LSTMs) used to process regular data sequences \cite[Chapter~10]{Goodfellow16-DeepLearning}, with the difference that LSTMs have an output gate in addition to the input and forget gates. The input and forget gates of a LSTM are calculated in the same way $\hhatq_t$ and $\chkq_t$ in \eqref{eqn:tgates_comp} would be if we considered the directed cycle graph.
Another common gated architecture for regular data are Gated Recurrent Units (GRUs)\cite[Chapter~10]{Goodfellow16-DeepLearning}, which are even simpler than LSTMs where only one gating variable $u_t \in [0,1]$ acts as the forget gate of LSTMs, and where the input gate is replaced by $1 - u_t$. The GRU architecture can be readily extended to t-GGRNNs.

\subsection{Node gating} \label{subsec:nodeGating}

In some cases, having the input and forget gates of a gated GRNN be scalars is limiting because the short/long term time interactions of the graph process might vary across nodes. This is especially true of graph processes that are, in reality, some unknown composition of processes happening independently at each node and/or on multiple, possibly non-disjoint, subgraphs of the original graph. In the Node Gated GRNN (n-GGRNN), we address this by defining the input gate and forget gate operators $\hcalQ$ and $\kcalQ$ as
\begin{equation}
\begin{aligned}
\hcalQ \left\{ \ccalA_{\bbS}(\bbX_{t}) \right\} & = \diag(\hbq_{t})  \ccalA_{\bbS}(\bbX_{t}) \\
\kcalQ \left\{ \ccalB_{\bbS}(\bbZ_{t}) \right\} & = \diag(\cbq_{t})  \ccalB_{\bbS}(\bbZ_{t})
\end{aligned}
\end{equation}
with parameters $\hbq_t \in [0,1]^N$ and $\cbq_t \in [0,1]^N$ given by
\begin{equation} \label{eqn:ngates_comp}
\begin{aligned}
\hbq_{t} = \mathrm{sigmoid} \left( \hat{\ccalC}_{\bbS}(\hbZ_{t}) \right) \\
\cbq_{t} = \mathrm{sigmoid} \left( \check{\ccalC}_{\bbS}(\check{\bbZ}_{t}) \right) \\
\end{aligned}
\end{equation}
and where, now, the learnable parameters are the filter taps of the graph convolutions $\hat{\ccalC}_{\bbS}$ and $\check{\ccalC}_{\bbS}$, given by $\hbC_{k} \in \reals^{1 \times \hhatH}$ and $\check{\bbC}_{k} \in \reals^{1 \times \chkH}$.

In the n-GGRNN, the gates $\hbq_t$ and $\cbq_t$ are reshaped as the diagonal matrices $\diag(\hbq_{t})$ and $\diag(\cbq_{t})$, which then multiply the input and state variables. The multiplication by $\diag(\hbq_{t})$ and $\diag(\cbq_{t})$ has the role of applying a separate scalar input and forget gate (both taking values between $0$ and $1$) to each node. This allows addressing the problem of vanishing gradients in space by controlling the importance of the input and of the state at the node level and partially shutting down nodes whose signal components can effectively behave as noise in the exchanges involved in some learning tasks. Besides adding flexibility to the gated architecture, n-GGRNNs have the advantage that their number of parameters is independent of the size of the graph, which could not be said about the t-GGRNNs from the previous subsection.

An interesting observation is that the composition of node gating with a graph convolution can be interpreted as the application of a node-varying graph filter \cite{Gama18-NodeVariant}, which, instead of weighing powers of $\bbS$ by scalars as in the LSI-GF [cf. \eqref{eqn:graphConv}], multiplies them by diagonal matrices assigning a different weight to each node. From an implementation standpoint, this is important because it allows simplifying the operations involved in the n-GGRNN.

\subsection{Edge gating} \label{subsec:edgeGating}

In node gating, we control long range graph dependencies by assigning a gate to each node \textit{after} local exchanges have occurred. In edge gating, the gates act within these local exchanges, controlling the amount of information that is transmitted across edges of the graph. The input and forget gate operators take form
\begin{equation}
\begin{aligned}
\hcalQ \left\{ \ccalA_{\bbS}(\bbX_{t}) \right\} & = \ccalA_{\bbS \odot \hbQ_{t}}(\bbX_{t}) \\
\kcalQ \left\{ \ccalB_{\bbS}(\bbZ_{t}) \right\} & = \ccalB_{\bbS  \odot \cbQ_{t}}(\bbZ_{t})
\end{aligned}
\end{equation}
where the shift operators that parametrize the input-to-state and state-to-state convolutions are now $\bbS \odot \hbQ_{t}$ and $\bbS \odot \cbQ_{t}$ respectively, with $\hbQ_{t}, \cbQ_{t} \in [0,1]^{N \times N}$.
$\hbQ_t$ and $\cbQ_t$ are calculated as
\begin{equation} \label{eqn:edgeGateComps}
\begin{aligned}[]
[\hbQ_{t}]_{ij} = \mathrm{sigmoid} \left( \hbc^{\Tr} [\bbdelta_{i}^{\Tr}\hbZ_{t}\hbC ||  \bbdelta_{j}^{\Tr}\hbZ_{t}\hbC]^{\Tr}  \right) \\
[\cbQ_{t}]_{ij} = \mathrm{sigmoid} \left( \cbc^{\Tr} [\bbdelta_{i}^{\Tr}\cbZ_{t}\cbC ||  \bbdelta_{j}^{\Tr}\cbZ_{t}\cbC]^{\Tr} \right)
\end{aligned}
\end{equation}
where $\bbdelta_i$ stands for the one-hot column vector with $[\bbdelta_i]_i=1$ and $||$ is the horizontal concatenation operation. The learnable parameters are $\hbC \in \reals^{\hhatH \times \hhatH'}$, $\hbc \in \reals^{2\hhatH' \times 1}$, $\cbC \in \reals^{\chkH \times \chkH'}$ and $\cbc \in \reals^{2\chkH' \times 1}$, and $\hhatH'$ and $\chkH'$ are arbitrary numbers of \emph{intermediate} features. 

Effectively, $\hbQ_t$ and $\cbQ_t$ scale the weight of each edge by a value between $0$ and $1$. When this value is $0$, the edge exchange is completely shut off, which can be helpful in GRNNs running on graphs with noisy or spurious edges, e.g. graphs built from sample covariance matrices. Note that each edge input gate $[\hbQ_{t}]_{ij}$ and forget gate $[\cbQ_{t}]_{ij}$ is computed individually, avoiding unnecessary computations for pairs $(i,j)$ that do not correspond to edges of the graph.

In practice, the computations carried out in equation \eqref{eqn:edgeGateComps} are implemented as Graph Attention Networks (GANs) \cite{Velickovic18-GraphAttentionNetworks}, whose \textit{attention coefficients} play the role of $[\hbQ_{t}]_{ij}$ and $[\cbQ_{t}]_{ij}$. Specifically tailored to graphs, GATs are attention mechanisms that generate meaningful representations of graph signals by incorporating the importance of a node's features to its neighbors in the extraction of subsequent features. This importance is learned in the form of attention coefficients between nodes $i$ and $j$ that are connected by an edge, and is calculated by applying a linear transformation and a nonlinearity to the their concatenated features.
Following normalization (either by a nonlinearity such as the sigmoid or by some other normalizing operation), the attention coefficients of GATs taking in $\hbZ_t$ and $\check{\bbZ}_t$ are well-suited implementations of the input and forget edge gates $[\hbQ_t]_{ij}$ and $[\cbQ_t]_{ij}$.

Similarly to how the composition of node gating with a graph convolution could be interpreted as a node-varying graph filter, composing edge gating with LSI-GFs can be seen as a particular implementation of an edge-varying graph filter \cite{contino2017distributed, isufi19-edgenets, Isufi20-EdgeNets}. Edge-varying graph filters are such that each edge is parametrized independently in multiplications by the GSO, which is precisely what happens when edge gates are applied to $\bbS$ in the input-to-state and state-to-state convolutions.

\subsection{\blue{Stability}} \label{subsec:gatingStability}

\blue{Since the parameters of the gate operators $\hcalQ$ and $\kcalQ$ are themselves the outputs of GRNNs, it is natural to ask whether the stability result from Theorem \ref{thm:stability} carries over to the gated GRNN in \eqref{eqn:gatingGeneric}. In order to analyze its stability without focusing on a specific type of gating, we will describe the
parameters of $\hcalQ$ and $\kcalQ$ as $\hat{\bbtheta}$ and $\check{\bbtheta}$, and the architectures used to predict these parameters from the state variables $\hbz_t$ \eqref{eqn:inputgatestate} and $\cbz_t$ \eqref{eqn:forgetgatestate} as generic models $\hat{\bbPhi}_{\bbS}$ and $\check{\bbPhi}_\bbS$ (the subscript $\bbS$ indicates that the graph is a hyperparameter). Strictly speaking, the parameters of the input and forget gates are obtained as $\hat{\bbtheta} = \hat{\bbPhi}_\bbS(\hbz_t)$ and $\check{\bbtheta} = \check{\bbPhi}_\bbS(\cbz_t)$, and we can write $\hcalQ = \hcalQ_{\hat{\bbtheta}}$ and $\kcalQ = \kcalQ_{\check{\bbtheta}}$. Note that the generic architectures $\hat{\bbPhi}_\bbS$ and $\check{\bbPhi}_\bbS$ can always be particularized to different gating mechanisms. In the case of time gating, $\hat{\bbPhi}_\bbS$ and $\check{\bbPhi}_\bbS$ are fully connected layers and $\hat{\bbtheta}$, $\check{\bbtheta}$ are scalars [cf. \eqref{eqn:tgates_comp}]; in node gating, $\hat{\bbPhi}_\bbS$ and $\check{\bbPhi}_\bbS$ are GNNs and $\hat{\bbtheta}$, $\check{\bbtheta}$ are vectors [cf. \eqref{eqn:ngates_comp}]; and, in edge gating, $\hat{\bbPhi}_\bbS$ and $\check{\bbPhi}_\bbS$ are the attention mechanism of a GAN and $\hat{\bbtheta}$, $\check{\bbtheta}$ are matrices [cf. \eqref{eqn:edgeGateComps}]. 

To prove stability of gated GRNNs, we also need the following assumptions on the gate operators $\hcalQ$, $\kcalQ$ and on the parameter learning models ${\hat{\bbPhi}_\bbS}$, $\check{\bbPhi}_\bbS$.

\begin{assumption} \label{as5}
In the induced operator norm, $\hcalQ=\hcalQ_{\hat{\bbtheta}}$ and $\kcalQ=\kcalQ_{\check{\bbtheta}}$ are $Q$-Lipschitz with respect to $\hat{\bbtheta}$ and $\check{\bbtheta}$, i.e. 
\[\|\hcalQ_{\hat{\bbtheta}^{(1)}}-\hcalQ_{\hat{\bbtheta}^{(2)}}\| \leq Q\|\hat{\bbtheta}^{(1)}-\hat{\bbtheta}^{(2)}\|.\]
\end{assumption}
 
\begin{assumption} \label{as6}
The models ${\hat{\bbPhi}_\bbS}$ and $\check{\bbPhi}_\bbS$ are $\phi_1$-Lipschitz functions of $\hbz_t$ and $\cbz_t$, i.e. $\|{\hat{\bbPhi}}_\bbS(\hbz^{(1)}_t) - {\hat{\bbPhi}}_\bbS(\hbz^{(2)}_t)\| \leq \phi_1 \|\hbz^{(1)}_t-\hbz^{(2)}_t\|$.
\end{assumption}

\begin{assumption} \label{as7}
In the induced operator norm, ${\hat{\bbPhi}_\bbS}$ and $\check{\bbPhi}_\bbS$ are $\phi_2$-Lipschitz with respect to the graph, i.e. $\|{\hat{\bbPhi}}_{\bbS^{(1)}} - {\hat{\bbPhi}}_{\bbS^{(2)}}\| \leq \phi_2 \|\bbS^{(1)}-\bbS^{(2)}\|$.
\end{assumption}

AS\ref{as5} is important because it allows expressing the norm of the difference operator in terms of the norm difference of the operators' parameters. It is also worth noting that assumptions AS\ref{as6} and AS\ref{as7} are not too restrictive, as we discuss in more detail in Remark \ref{rmk:restrictive}.  

\begin{theorem}[Stability of Gated GRNNs] \label{thm:GGRNNstability}
	Consider two graphs with $N$ nodes represented by the GSOs $\bbS = \bbV \bbLambda \bbV^{\Hr}$ and $\tbS$.
	 Let $\bbE = \bbU \bbM \bbU^{\Hr} \in \ccalE(\bbS, \tbS)$ be a relative perturbation matrix [cf. \eqref{eqn:errorSet}] such that [cf. \eqref{eqn:graphDistance}]
	\begin{equation} \label{eqn:distanceCondition2}
		d(\bbS, \tbS) \leq \| \bbE \| \leq \varepsilon.
	\end{equation}
	Let $\bby_t$ and $\tby_t$ be the outputs of gated GRNNs [cf. \eqref{eqn:gatingGeneric} and \eqref{eqn:GRNNoutput}] with $F=H=1$ feature running on $\bbS$ and $\tbS$ respectively, and satisfying AS\ref{as1} through AS\ref{as7}.
	 Then, it holds that
	\begin{align} \label{eqn:ggrnn_stability}
	\begin{split}
		\min_{\bbP \in \ccalP} \| \bby_{t} - \bbP^{\Tr} \tby_{t} \| &\leq C(1+\delta\sqrt{N})(3t+t^2)\varepsilon \\
&+ Q\left(\phi_2+\phi_1C(1+\delta\sqrt{N})\right)t^3\varepsilon\\
&+Q\phi_1C(1+\delta\sqrt{N})t^4\varepsilon+\ccalO(\varepsilon^2)
\end{split}
	\end{align}
	where $C$ is the maximum filter constant, 
	\[C =\max\{C_{\bbA},C_{\bbB},C_{\hbA},C_{\hbB},C_{\cbA},C_{\cbB},C_{\bbC}\}\]
    and $\delta = (\|\bbU-\bbV\| + 1)^{2} - 1$ measures the eigenvector misalignment between the GSO $\bbS$ and the error matrix $\bbE$.

\end{theorem}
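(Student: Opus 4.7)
The plan is to build on the proof of Theorem \ref{thm:stability} (which bounds the output difference of a plain GRNN by $C(1+\sqrt{N}\delta)(t^2+3t)\varepsilon + \ccalO(\varepsilon^2)$), and then propagate the additional perturbations introduced by the two gate operators. The key observation is that the gate states $\hbz_t$ and $\cbz_t$ in \eqref{eqn:inputgatestate}--\eqref{eqn:forgetgatestate} are themselves outputs of ordinary (ungated) GRNNs, so their stability is already handled by Theorem \ref{thm:stability}: under AS\ref{as1}--AS\ref{as4} we have $\min_{\bbP}\|\hbz_t-\bbP^\Tr\tilde{\hbz}_t\|\le C(1+\delta\sqrt{N})(t^2+3t)\varepsilon+\ccalO(\varepsilon^2)$, and analogously for $\cbz_t$.

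Next I would translate these state bounds into bounds on the gate operators themselves. Using AS\ref{as6} and AS\ref{as7}, the difference in the gate parameters satisfies
\[
\|\hat\bbtheta-\tilde{\hat\bbtheta}\|
\;\le\;\|{\hat\bbPhi}_\bbS(\hbz_t)-{\hat\bbPhi}_{\tbS}(\hbz_t)\|+\|{\hat\bbPhi}_{\tbS}(\hbz_t)-{\hat\bbPhi}_{\tbS}(\tilde{\hbz}_t)\|
\;\le\;\phi_2\varepsilon+\phi_1 C(1+\delta\sqrt{N})(t^2+3t)\varepsilon,
\]
and then AS\ref{as5} upgrades this to an operator-norm bound
$\|\hcalQ-\tilde\hcalQ\|\le Q\bigl(\phi_2+\phi_1 C(1+\delta\sqrt{N})(t^2+3t)\bigr)\varepsilon$, with the analogous statement for $\kcalQ-\tilde\kcalQ$. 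All terms are linear in $\varepsilon$ modulo $\ccalO(\varepsilon^2)$ corrections that were already absorbed when invoking Theorem \ref{thm:stability}.

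Now I would set up the main recursion. Writing $\Delta_t=\min_\bbP\|\bbz_t-\bbP^\Tr\tbz_t\|$ and using that $\sigma$ is normalized Lipschitz (AS\ref{as2}), the triangle inequality gives
\[
\Delta_t \;\le\; \bigl\|(\hcalQ-\tilde\hcalQ)\,\ccalA_\bbS(\bbx_t)\bigr\|
+\bigl\|\tilde\hcalQ\,(\ccalA_\bbS-\bbP^\Tr\ccalA_{\tbS}\bbP)(\bbx_t)\bigr\|
+\bigl\|(\kcalQ-\tilde\kcalQ)\,\ccalB_\bbS(\bbz_{t-1})\bigr\|
+\bigl\|\tilde\kcalQ\,(\ccalB_\bbS-\bbP^\Tr\ccalB_{\tbS}\bbP)(\bbz_{t-1})\bigr\|
+\bigl\|\tilde\kcalQ\,\ccalB_{\tbS}(\bbz_{t-1}-\bbP^\Tr\tbz_{t-1})\bigr\|.
\]
The two filter-difference terms are controlled exactly as in the proof of Theorem \ref{thm:stability} and each contributes $C(1+\delta\sqrt{N})\varepsilon$ per step. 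The last term is the recursive $\Delta_{t-1}$ contribution, bounded by $\|\tilde\kcalQ\|\cdot\Delta_{t-1}\le\Delta_{t-1}$ after normalizing the gate. The two gate-difference terms are where the higher-order temporal scaling enters: the input-gate term contributes $Q(\phi_2+\phi_1 C(1+\delta\sqrt{N})(t^2+3t))\varepsilon$, while the forget-gate term picks up an additional factor $\|\bbz_{t-1}\|$, which I would bound by a routine induction showing $\|\bbz_t\|=\ccalO(t)$ using the normalized filters ($\|\ccalA_\bbS\|,\|\ccalB_\bbS\|\le 1$) and gates together with AS\ref{as3}--AS\ref{as4}.

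Solving the recursion is then essentially a summation. Summing the per-step contributions of orders $1$, $t^2$, and $t^3$ (the last one coming from the forget-gate term multiplied by the $\ccalO(t)$ bound on $\|\bbz_{t-1}\|$) produces, respectively, the $t^2$, $t^3$, and $t^4$ scaling in \eqref{eqn:ggrnn_stability}. A final application of the filter $\bbC$ with constant $C_\bbC\le C$ contributes one more factor $C(1+\delta\sqrt{N})$ from the output-layer perturbation, matching the structure in Theorem \ref{thm:stability}. I expect the main obstacle to be the bookkeeping of the interleaved recursions: one must justify that cross-perturbation terms (perturbed filter acting on perturbed state, perturbed gate acting on perturbed filter output, etc.) are all $\ccalO(\varepsilon^2)$ and can be discarded, and also verify that the $\ccalO(t)$ bound on $\|\bbz_t\|$ is uniform under the gating, which is where AS\ref{as5} and the normalization in AS\ref{as1}--AS\ref{as2} are jointly essential. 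The rest of the argument is mechanical telescoping of the linear recurrence $\Delta_t\le \Delta_{t-1}+\alpha_t\varepsilon$ with $\alpha_t$ polynomial in $t$.
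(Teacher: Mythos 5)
Your proposal follows essentially the same route as the paper's Appendix~C proof: the identical split of each gated term into a gate-difference part (controlled via AS\ref{as5}--AS\ref{as7} and the stability of the ungated gate-state GRNNs) plus a filter/state-difference part, the observation that gating does not alter the $\ccalO(t)$ bound on $\|\bbz_t\|$, and the same telescoping of the resulting linear recurrence to produce the $t^3$ and $t^4$ terms. The only bookkeeping slip is that when applying AS\ref{as7} you drop the factor $\|\hbz_t\|$ (resp.\ $\|\cbz_t\|$) that comes with the induced operator norm---this factor is $\ccalO(t)$ and is precisely what elevates the $Q\phi_2$ contribution to $t^3$ in \eqref{eqn:ggrnn_stability}---and the gate-state stability should invoke the intermediate state bound \eqref{eqn:state_diff3} (which scales as $t+t^2$) rather than the output bound of Theorem~\ref{thm:stability}, although the latter still upper-bounds the former.
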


We observe that the stability constant of gated GRNNs is equal to the stability constant of the non-gated GRNN [cf. Theorem \ref{thm:stability}] plus a term that depends on the third and fourth powers of $t$. This is because the parameters of the input and forget gates are generated by GRNNs, which add two more recurrence relationships to the architecture. 
Regardless of the gating mechanism (time, node or edge gates), this additional term can be adjusted by tuning $Q$ and, especially, $\phi_1$ and $\phi_2$, which are specific to the architectures $\hat{\bbPhi}$ and $\check{\bbPhi}$. 

\begin{remark} \label {rmk:restrictive}
Fully connected and convolutional layers with most conventional activation functions are Lipschitz stable to input perturbations \cite{virmaux2018lipschitz}, so AS\ref{as6} is generally satisfied for all types of gating (in edge gating, the most common attention mechanism is a multi-layer perceptron \cite{Velickovic18-GraphAttentionNetworks}). In time gating, the fully connected layers that make up $\hat{\bbPhi}$ and $\check{\bbPhi}$ do not depend on $\bbS$, so AS\ref{as7} is also satisfied automatically. Since in node gating $\hat{\bbPhi}$ and $\check{\bbPhi}$ are GNNs, AS\ref{as7} is guaranteed by \cite[Theorem 4]{Gama19-Stability} as long as the graph convolutions are integral Lipschitz [cf. Definition \ref{def:integralLipschitz}] and the activation functions are normalized Lipschitz [cf. Assumption \ref{as2}]. 
Finally, we can expect AS\ref{as7} to hold for edge gating at least in cases where the edges of the graph are preserved, since the attention coefficients only depend on there being an edge between two nodes, but not on the edge weight. Also note that edge preservation is enforced by the definition of the minimum relative perturbation matrix, which measures how close two graphs are to being permutations of one another [cf. \eqref{eqn:graphDistance}].
\end{remark}

}


\section{Numerical Experiments} \label{sec:sims}



%

\begin{figure*}[t]
	\centering
	\begin{subfigure}{.28\textwidth}
		\centering
		\includegraphics[width=\textwidth]{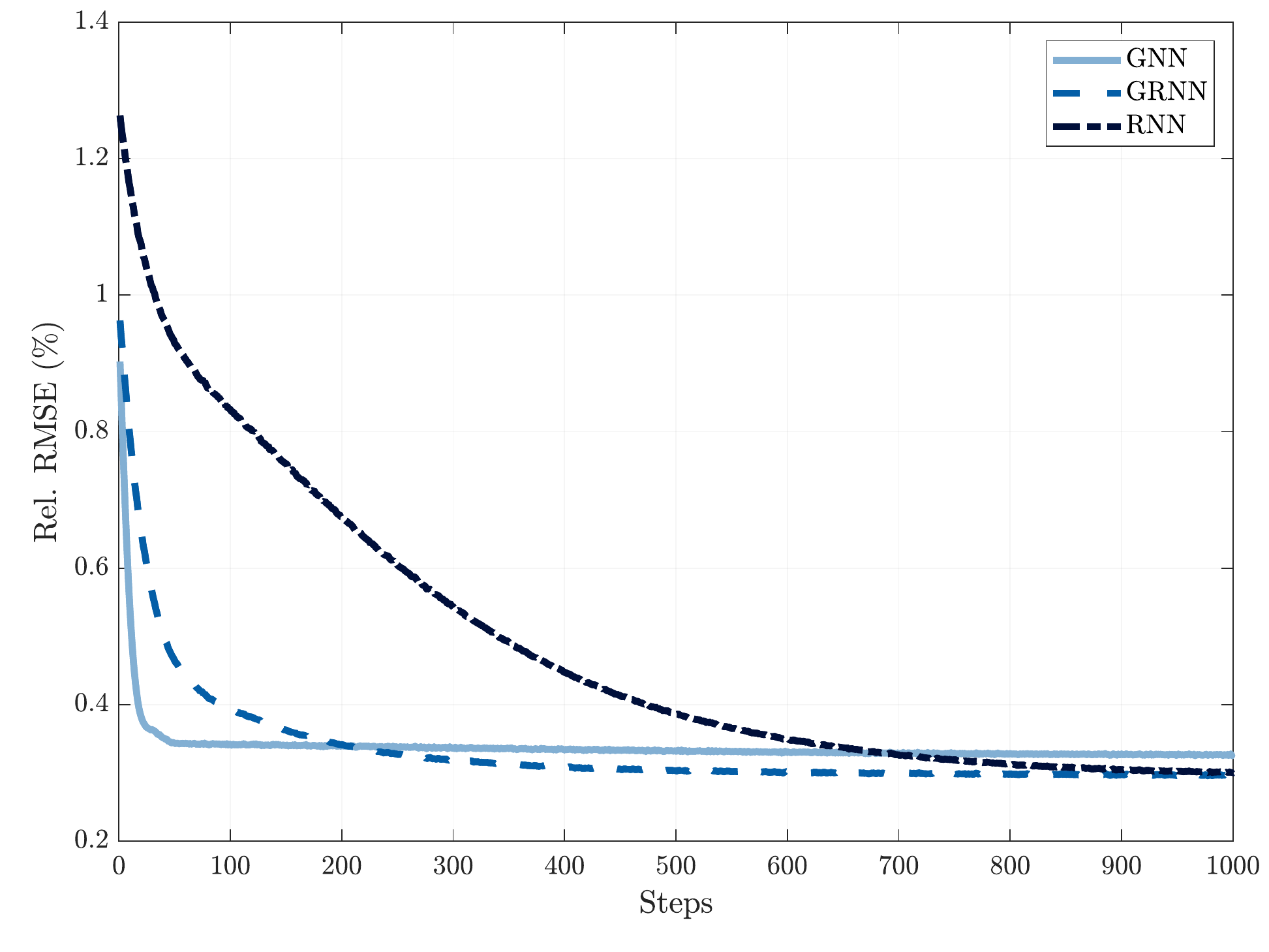}
		\caption{}
		\label{evolution}
	\end{subfigure}
	\hspace{2em}
	\begin{subfigure}{.28\textwidth}
		\centering
		\includegraphics[width=\textwidth]{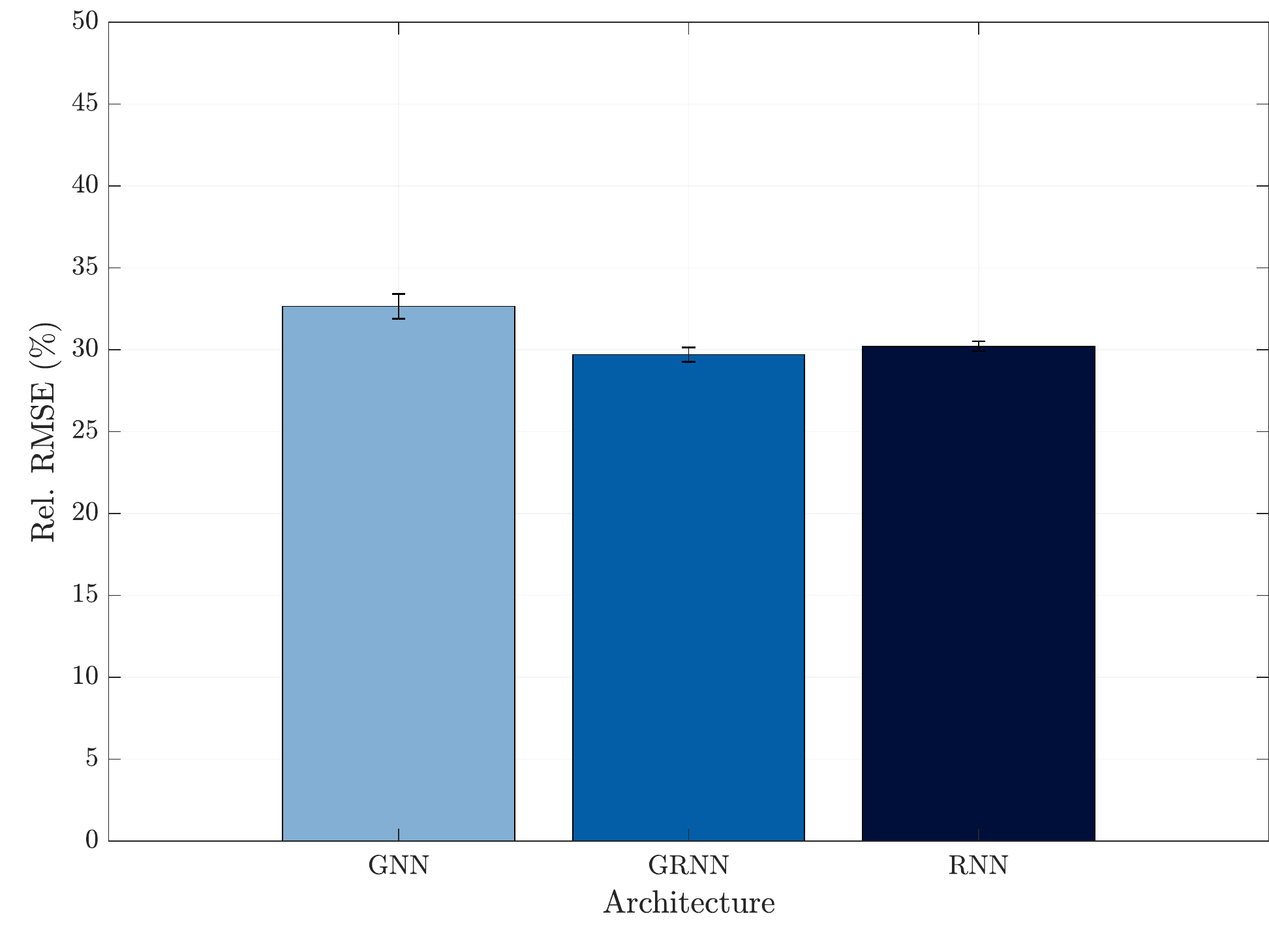} 
		\caption{}
		\label{10000}		
	\end{subfigure}
		\hspace{2em}
	\begin{subfigure}{.28\textwidth}
		\centering
		\includegraphics[width=\textwidth]{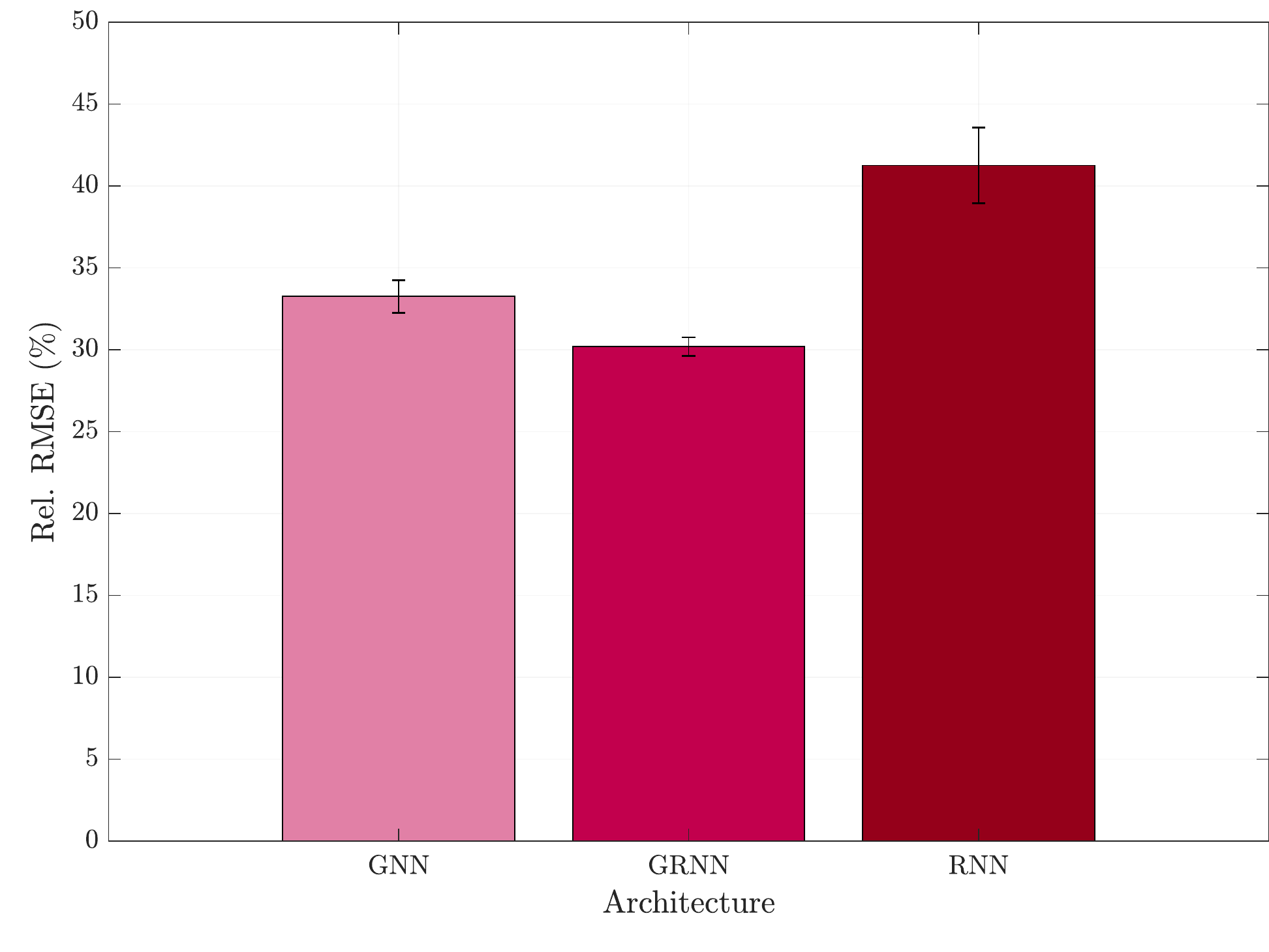} 
		\caption{}
		\label{5000}		
	\end{subfigure}
	\caption{\blue{$5$-step prediction using a GNN, a GRNN and a RNN. \subref{evolution} Training rRMSE evolution over 1000 training steps. \subref{10000} Average test rRMSEs for 5 graphs and 5 data realizations, using a 10,000-sample training dataset. \subref{5000} Average test rRMSEs for 5 graphs and 5 data realizations, using a 5,000-sample training dataset.}}
	\label{fig:grnn_basic}
\end{figure*}	

In this section, we present a series of numerical experiments where the GRNN is compared with GNNs and RNNs, and where the advantages of time, node and edge gating are analyzed. In the first four experiments (subsections \ref{sbs:comparison_gnn_rnn}, \ref{sbs:ar}, \ref{sbs:diffusion}, \ref{sbs:covariance}), we use synthetic data to simulate the problem of $k$-step prediction, where, given instantaneous observations of a synthetic graph process, the goal is to predict the graph signals observed $k$ steps ahead. In each subsection, a different type of process is considered to assess the advantages of various gating strategies in different scenarios. 
The fifth experiment (subsection \ref{sbs:earthquake}) uses earthquake data from New Zealand's Geonet database \cite{geonet} to predict the region of origin of each earthquake registered between June 17, 2019 and July 17, 2019. The data consists of seismograph readings from a network of $N=59$ seismographs immediately before each earthquake, and all earthquakes are assigned a class corresponding to one out of $C=11$ regions. \blue{In subsection \ref{sbs:traffic}, we use real traffic data measured by $N=207$ speed sensors to perform traffic forecasting in the Los Angeles metropolitan area. This data was collected between March and June 2012 and is aggregated in the METR-LA dataset \cite{jagadish2014big}, which is commonly used for benchmark in traffic forecasting. In the last experiment (subsection \ref{sbs:epidemic}), we use real 2013 data from a high school in Marseilles to build a friendship network with $N=134$ nodes where we simulate the spread of an infectious disease using the SIR (Susceptible-Infected-Recovered) model. Different GRNN models are then trained to solve a binary node classification problem aiming to predict which nodes of the network will be infected in 8 days.
}

\blue{In the experiments of subsections \ref{sbs:earthquake} through \ref{sbs:epidemic}, we also include comparisons with other gated graph recurrent architectures from the literature, namely the DCRNN \cite{li2017diffusion} and the GCRN \cite{seo2018structured}. Both of them use a node gating mechanism and slight variations of their architectures can be obtained by particularizing the GSO in equation \eqref{eqn:gatingGeneric} to $\bbS = \diag(\bbA\boldsymbol{1})\bbA + \diag(\bbA^\Tr\boldsymbol{1})\bbA^\Tr$ (the random walk matrix) and $\bbS = \bbI - \bbD^{-1/2}\bbA\bbD^{-1/2}$ (the normalized Laplacian) respectively. Note that, because the GSO of the GCRN is the normalized Laplacian, this architecture can only be applied to problems where the graph is undirected, which is why we do not include it in the earthquake epicenter estimation experiment of subsection \ref{sbs:earthquake}.}


Unless otherwise noted, \blue{the GSO is the adjacency matrix (except for the DCRNN and the GCRN),} the recurrent architectures have a single recurrent layer and the state nonlinearity $\sigma$ is always the $\tanh$ function. The nonlinearities of the GNNs with which we compare our architectures are \blue{also the $\tanh$}. All models are trained using the ADAM algorithm \cite{kingma17-adam} with decaying factors $\beta_1 = 0.9$ and $\beta_2 = 0.999$. We will denote the number of input features by $F_{\bbX}$ and the number of state features by $F_{\bbZ}$, and the number of filter taps in $\ccalA_{\bbS}$ and $\ccalB_{\bbS}$ by $K_{\bbX}$ and $K_{\bbZ}$ respectively. When $\ccalC_{\bbS}$ is a multi-layer GNN, or when comparing against a GNN architecture, the number of features outputted by layer $\ell$ of the GNN is $F_\ell$, and the number of filter taps of this layer is $K_\ell$. When applicable, we denote the number of output features by $F_{\bbY}$.

\subsection{$k$-step prediction: GRNN vs. GNN vs. RNN} \label{sbs:comparison_gnn_rnn}

Let $\ccalG$ be an SBM graph with $N=80$ nodes, $c = 5$ communities, intra-community probability $p_{c_i c_i} = 0.8$ and inter-community probability $p_{c_i c_j} = 0.2$. We write a noisy diffusion process on this graph as 
\begin{equation}
\bbx_t = \bbS \bbx_{t-1} + \bbw_t
\end{equation}
where $\bbx_t \in \reals^N$ is a graph signal, $\bbS \in \reals^{N \times N}$ is the GSO and $\bbw_t \in \reals^N$ is a zero-mean Gaussian noise with temporal variance $\xi^2 = 0.01$ and spatial variance (across nodes) $\eta^2 = 0.01$. The problem of $k$-step prediction consists of estimating $\bbx_{t+k}, \bbx_{t+k+1}, \bbx_{t+k+2}, \ldots$ from $\bbx_{t}, \bbx_{t+1}, \bbx_{t+2}, \ldots$.

We simulate this process for many values of $\bbx_0$ and over multiple time steps, feeding the generated data to three neural network models trained to predict the diffused graph signals $k = 5$ steps ahead. These models are a GRNN, a GNN and a RNN. Using different amounts of training data, our goal is to compare how well these architectures generalize on the test set. 
The GRNN architecture takes in single-feature input sequences ($F_\bbX=1$) and consists of one recurrent layer with $F_\bbZ = 5$ state features and $K_\bbX = K_\bbZ = 5$ filter taps for both the input-to-state and the state-to-state filters. The state features are mapped to the output using a 1-layer GNN with $K_{1} = 1$ and $F_1 = F_{\bbY}=1$, adding up to $155$ parameters. The GNN architecture takes in individual samples $\bbx_t$ from the sequence to predict the sample $k$ steps ahead $\bbx_{t+k}$, and is made up of 2 graph convolutional layers with $F_{\bbX} = 1, F_1 = 8$, $F_{2} = F_\bbY = 1$ and $K_1 = K_2 = 10$, totaling $160$ parameters. Finally, the RNN has one layer and both the input and the output have $N$ features ($F_{\mbox{\scriptsize in}} = F_{\mbox{\scriptsize out}} = N$), corresponding to the nodes of the graph; the state has $F_\bbZ = 1$ features, and the number of parameters is equal to $160$. The fact that all architectures have roughly the same number of parameters is not a coincidence, and was intended to make sure that comparisons are fair.

All architectures were trained by optimizing the $L1$ loss over $10$ epochs with learning rate $10^{-3}$. In the first set of experiments, the size of the training, validation and test sets were $10000$, $2400$ and $200$ sample sequences respectively, and training was done in batches of $100$. The average relative root mean square error (rRMSE) on the test set for 25 Monte-Carlo simulations (corresponding to $5$ different graphs and $5$ different dataset realizations) are presented in Figure \ref{10000}, while the average training rRMSE for each architecture versus the number of training steps is presented in Figure \ref{evolution}. The GRNN outperforms the GNN in almost \blue{3 p.p.}, but the GRNN and RNN achieve roughly the same performance. This makes sense, considering that the solution space searched by the GRNN is a subset of that searched by the RNN. On the other hand, Figure \ref{evolution} shows that the GRNN architecture explores this solution space more efficiently and, consequently, trains faster. This observation is corroborated by the results obtained in the second set of experiments, where the size of the training and validation sets was cut by half. As seen in Figure \ref{5000}, when less training data is available the advantages of the additional structure carried by GRNNs are more perceptible, with the GRNN outperforming the RNN in over 10 percentage points even though their number of parameters is exactly the same. 

%

\begin{figure}[t]
	\centering
	\includegraphics[width=0.84\columnwidth]{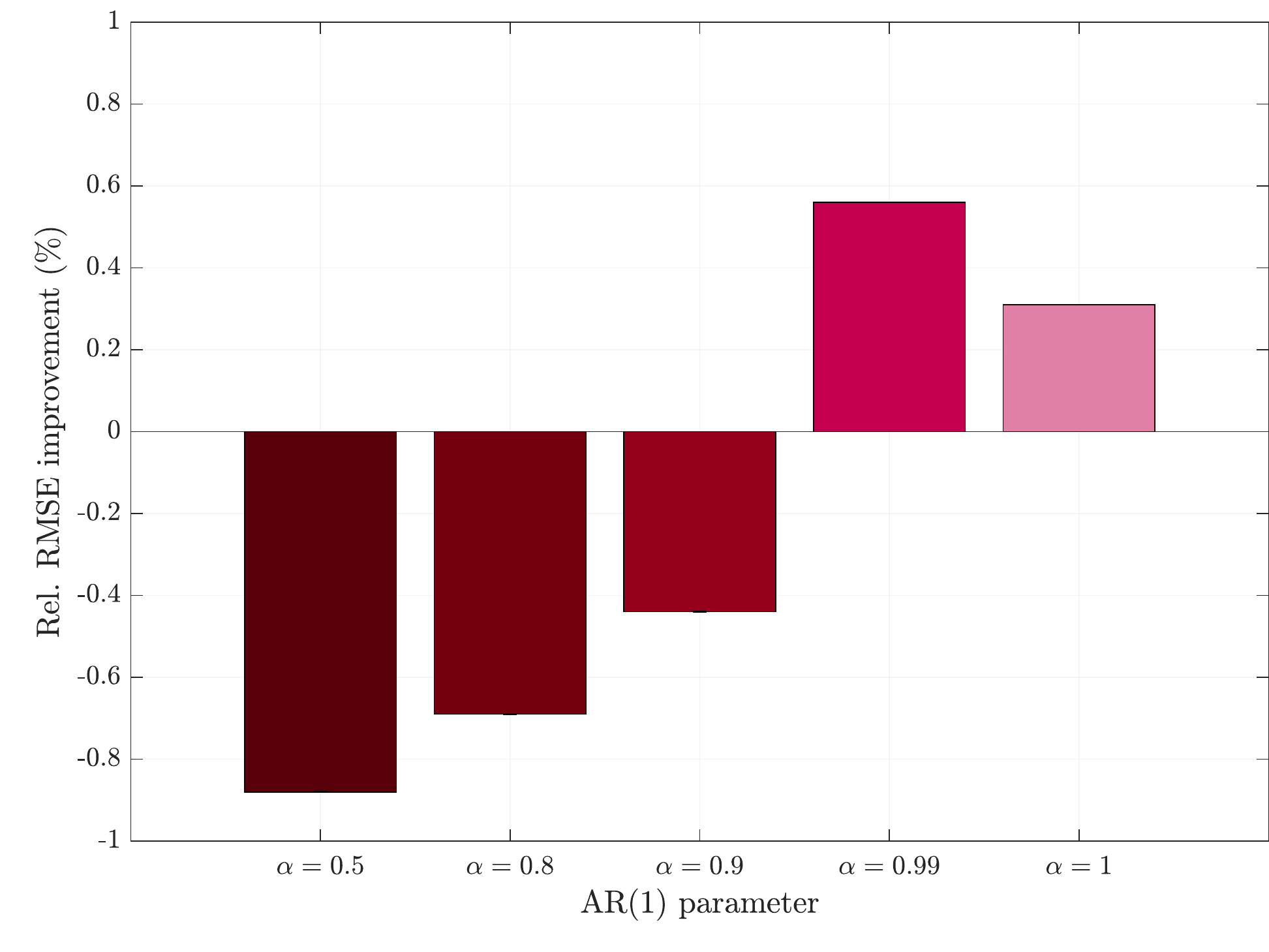} 
	\caption{Relative RMSE improvement of t-GGRNN over GRNN on AR diffusion process [cf. \eqref{eqn:ar}] for various values of $\alpha$.}
	\label{fig:ar}
\end{figure}	

\subsection{$k$-step prediction: AR(1) process and time gating} \label{sbs:ar}

In this experiment, $\ccalG$ is a SBM graph with $N=20$ nodes, $c=2$ communities, intra-community probability $p_{c_ic_i} = 0.8$ and inter-community probability $p_{c_ic_j} = 0.2$. The graph process is an AR(1) process with parameter $0 < \alpha \leq 1$,
\begin{equation} \label{eqn:ar}
\bbx_t = \alpha \bbx_{t-1} + \bbw_t\ 
\end{equation}
where $\bbw_t \in \reals^N$ is a zero-mean Gaussian noise with temporal variance $\xi^2 = 0.01$ and spatial variance $\eta^2 = 0.01$.
When $\alpha$ is close to $0$, this process is weakly correlated in time; when $\alpha \approx 1$, $\bbx_t \approx \bbx_{t-1}$ and the process is strongly correlated.

To assess the advantages of time gating [cf. equation \eqref{eqn:tGGRNN}] in processes with different levels of temporal correlation, we simulate the $k$-step prediction problem for $k=10$ and a range of values of $\alpha$ between $0$ and $1$. The architectures that we compare are a time-gated GRNN and a conventional GRNN with $F_\bbX=1$, $F_\bbZ = 10$ and $K_\bbX = K_\bbZ = 4$. In both GRNNs, the state is mapped to the output using a 1-layer GNN with $F_1 = F_\bbY = 1$ and $K_{\text{out}} = 1$.
The GRNN and the time-gated GRNN are trained by optimizing the L1 loss on $10000$ training samples over $10$ epochs, with learning rate $10^{-3}$ and batch size $100$. The number of samples in the validation and test sets are $2400$ and $200$ respectively, and we report results for 25 Monte Carlo simulations corresponding to 5 different graphs and 5 different datasets per graph. 

The relative RMSE improvement of the time-gated GRNN over the basic GRNN is presented in Figure \ref{fig:ar} for multiple values of $\alpha$. We observe that, for small $\alpha$, the GRNN achieves lower RMSE than the t-GGRNN on average, but, as $\alpha$ increases, its performance improves relative to the GRNN. This can be explained by the fact that, when $\alpha$ is large, there is memory \textit{within the process}, i.e., within $\bbx_t$. Because of this ``built-in memory'', the state $\bbz_t$ is less important than the instantaneous input $\bbx_t$. The forget gate $\check{\ccalG}_t$ thus helps tune this importance, partially shutting off $\bbz_t$ and making it easier to learn long term dependencies from the process itself.

\subsection{$k$-step prediction: graph diffusion and node gating} \label{sbs:diffusion}

Consider an SBM graph $\ccalG$ with $N=20$ nodes, $C=2$ communities, and inter-community and intra-community probabilities $p_{c_ic_j} = 0.8$ and $p_{c_ic_i} = 0.1$. In this experiment, we simulate a graph diffusion process where the GSO $\bbS \in \reals^{N \times N}$ is exponentiated by $\alpha \in (0,1]$,
\begin{equation} \label{eqn:diff}
\bbx_t = \bbS^\alpha \bbx_{t-1} + \bbw_t
\end{equation}
and where $\bbw_t \in \reals^N$ a zero-mean Gaussian with temporal variance $\xi^2 = 0.01$ and spatial variance $\eta^2 = 0.01$. The role of $\alpha$ is to control the process' ``spatial correlation''. The idea is that, the closer $\alpha$ is to $1$, the more correlated the process is across nodes connected by edges of the graph.

To assess the advantages of node gating in spatially correlated graph processes, we simulate this process for a range of values of $\alpha$ and train a GRNN and a node-gated GRNN to predict the diffused signals $k=10$ steps ahead.  Both the GRNN and n-GGRNN take in input sequences with $F_\bbX=1$ input feature and have $F_\bbZ = 10$ state features and $K_\bbX = K_\bbZ = 4$ filter taps. The state features are mapped to the output using a 1-layer GNN with $F_1=F_\bbY=1$ and $K_1 = 1$.
The GRNN and the node-gated GRNN are trained by optimizing the L1 loss on $10000$ training samples over $10$ epochs, with learning rate $10^{-3}$ and batch size $100$. The number of samples in the validation and test sets are $2400$ and $200$ respectively, and we report results for 25 Monte Carlo simulations (5 graphs and 5 datasets per graph). 

The average relative RMSE improvement of the node gated architecture over the basic GRNN architecture are shown in Figure \ref{fig:diffusion} for $\alpha$ in the range $[0.005,0.01, 0.1, 0.2, 0.5, 1]$. 
When $\alpha$ is closer to $0$, the state $\bbz_t$ is more informative than $\bbx_t$, and we see the effect of the input gate shutting off the input; when $\alpha$ is closer to $1$, the input $\bbx_t$ is more informative, and it is now the forget gate that shuts off the state. In the mid-range, the effects of both gates are combined, yielding the largest performance improvements of the n-GGRNN over the GRNN.

\subsection{$k$-step prediction: covariance graphs and edge gating} \label{sbs:covariance}

%

\begin{figure}[t]
	\centering
	\includegraphics[width=0.85\columnwidth]{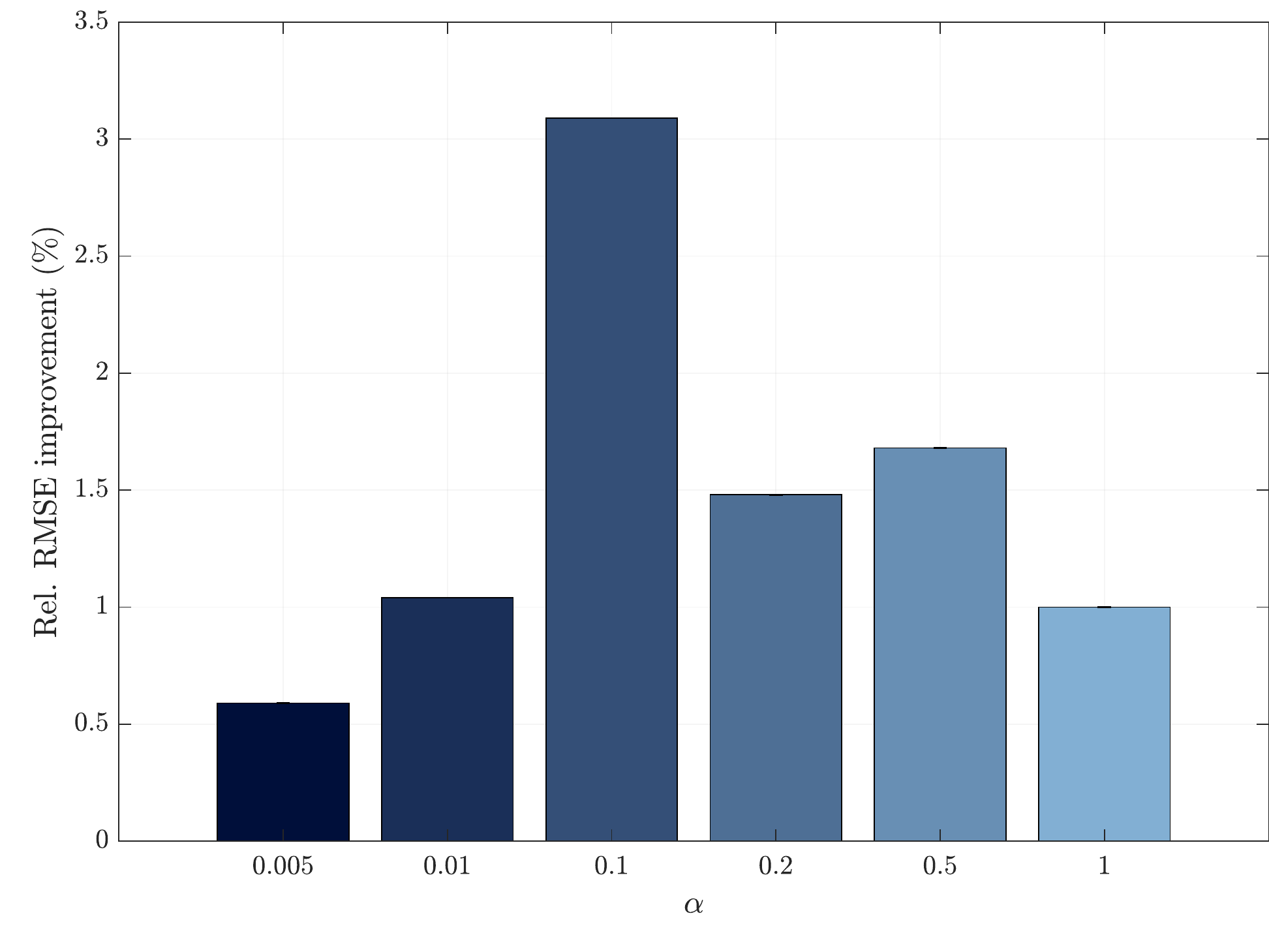} 
	\caption{Relative RMSE improvement of n-GGRNN over GRNN on graph diffusion process [cf. \eqref{eqn:diff}] for various values of $\alpha$.}
	\label{fig:diffusion}
\end{figure}	

In this experiment, the graph $\ccalG$ is a $n$-nearest-neighbor covariance graph with $N=20$ nodes, and the diffusion process is a simple graph diffusion given by
\begin{equation}
\bbx_t = \bbS \bbx_{t-1} + \bbw_t
\end{equation}
where $\bbw_t \in \reals^N$ a zero-mean Gaussian with temporal variance $\xi^2 = 0.01$ and spatial variance $\eta^2 = 0.01$. 
The GSO $\bbS$ is the sample covariance of the $\bbx_0$ in the training set, which consist of $10000$ samples taken from a multivariate normal with mean $\boldsymbol{\mu}= \boldsymbol{1}$ and true covariance matrix $[\Sigma]_{ii} = 3$, $[\Sigma]_{ij} = 1$ for $1 \leq i,j \leq 20$, $i \neq j$. This procedure yields complete graphs and so, to assess the effects of edge gating in graphs with different levels of connectivity, we set each node's maximum number of neighbors to $n$, varying $n$ between $5$ and $20$. 

The architectures we compare are an edge-gated GRNN and a basic GRNN, both with $F_\bbX = 1$, $F_\bbZ = 10$, $K_\bbX = K_\bbZ = 4$ and followed by an output GNN with one layer, $K_\bbY = 1$ and $F_1 = F_\bbY = 1$. These architectures are trained by optimizing the L1 loss on $10000$ training samples over $10$ epochs, with learning rate $10^{-3}$ and batch size $100$. The number of samples in the validation and test sets are $2400$ and $200$ respectively, and we report results for 25 Monte Carlo realizations (5 graphs and 5 datasets per graph).

The relative RMSE improvement of the e-GRNN over the GRNN is presented in Figure \ref{fig:cov} for each value of $n$. When $n$ is small, the e-GGRNN produces a larger relative test RMSE than the non-gated GRNN on average, but as $n$ increases this behavior gradually shifts and the effects of edge gating can be perceived. In particular, we observe that for $n=15$ and $n=20$ the edge-gated GRNN outperforms the GRNN.

\subsection{Earthquake epicenter estimation} \label{sbs:earthquake}

In this problem, we use seismic wave data from the Geonet database \cite{geonet} to predict the region of origin of $2289$ earthquakes registered between June 17, 2019 and July 17, 2019 in New Zealand. The graph $\ccalG$ is a 3-nearest neighbor \blue{directed} network constructed from the coordinates of $N=59$ seismographs, and the input data consists of \blue{$10\text{s}$, $20\text{s}$ and $30\text{s}$ seismic wave readings sampled at $2$ Hz and registered immediately before the seisms.}
The prediction space $\ccalY$ are the $C=11$ geographic regions of New Zealand, each of which is matched with a class label.

%

\begin{figure}[t]
	\centering
	\includegraphics[width=0.86\columnwidth]{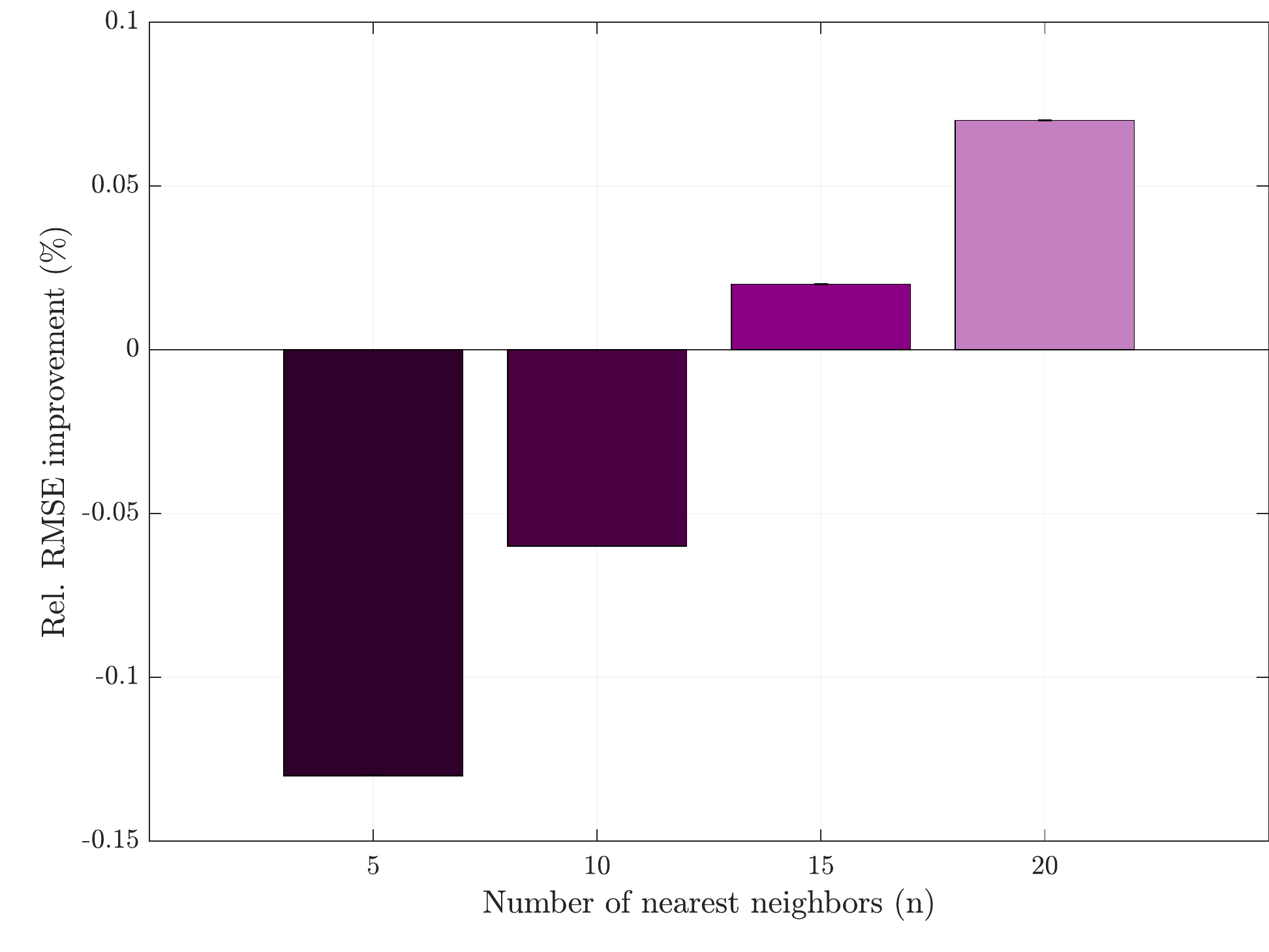} 
	\caption{Relative RMSE improvement of e-GGRNN over GRNN on covariance graphs with 5, 10, 15 and 20 nearest neighbors.}
	\label{fig:cov}
\end{figure}	
\blue{
\begin{table}[t]
\centering
\begin{tabular}{c|c}  \hline
\multicolumn{2}{c}{Number of parameters in the $10\text{s}$ case} \\ \hline
 GNN & $1$ layer, $F_\bbX = 20, F_1=6, K_1=3$ \\
 RNN & $1$ layer, $F_\bbX=1, F_\bbZ=6$ \\
 GRNN w/o gates & $F_\bbX = 1, F_\bbZ=8, K_\bbX = 5, K_\bbZ = 5$ \\ \hline
 \multicolumn{2}{c}{Number of parameters in the $20\text{s}$ case} \\ \hline
 GNN & $1$ layer, $F_\bbX = 40, F_1=8, K_1=4$ \\
 RNN & $1$ layer, $F_\bbX=1, F_\bbZ=20$ \\
 All GRNNs + \cite{li2017diffusion} & $F_\bbX = 1, F_\bbZ=16, K_\bbX = 5, K_\bbZ = 5$ \\ \hline
  \multicolumn{2}{c}{Number of parameters in the $30\text{s}$ case} \\ \hline
   All GRNNs + \cite{li2017diffusion} & $F_\bbX = 1, F_\bbZ=32, K_\bbX = 5, K_\bbZ = 5$ \\ \hline
\end{tabular}
\caption{\blue{Number of parameters of the GNN, RNN and all GRNN architectures in the earthquake epicenter estimation experiment for $10\text{s}$, $20\text{s}$, and $30\text{s}$-long seismic waves.}}
\label{table:params-earthquake10}
\end{table}
}

\blue{Two experimental scenarios are considered. In the first, we compare GRNNs with RNNs and GNNs to analyze the advantages of GRNNs in a real-world setting. In the second, the basic GRNN architecture is compared with time, node and edge-gated GRNNs, and with the gated DCRNN architecture from \cite{li2017diffusion} to assess the advantages of gating.}
The hyperparameters of these architectures are presented in Table \ref{table:params-earthquake10} for the $10\text{s}$, $20\text{s}$ and \blue{$30\text{s}$} cases, and were set to ensure roughly the same number of parameters in the inner layers of these architectures (except for the time, node, and edge-gated architectures, which have one additional GRNN per gate).
\blue{In all experiments, we report the average test accuracy of $10$ Monte Carlo simulations using $1648$ earthquakes for training, $412$ for validation and $229$ for testing, and optimize the cross-entropy loss over $40$ epochs with learning rate $5\times 10^{-5}$ and batch size $50$. The cross-entropy loss was adjusted for class imbalance by adding class weights inversely proportional to the class sizes.}

\blue{\subsubsection{GRNN vs. GNN vs. RNN}
The average and maximum test accuracy achieved by the GRNN, the GNN and the RNN are reported in Table \ref{table:earthquake10} for both the $10\text{s}$ and $20\text{s}$ input signals. Note that, in the $10\text{s}$ case, the GRNN outperforms the RNN on average, but achieves virtually the same performance as the GNN. Also note that while the RNN has best overall accuracy for one of the data splits, it has the largest variance. A possible reason for this inconsistent performance is the fact that the exchanges of node feature information in the RNN do not always match the communication structure defined by the edges of the seismograph network.
When we swap the input data for longer sequences (the $20\text{s}$ waves), the GRNN is able to retain the same performance as in the $10\text{s}$ case. Meanwhile, the accuracy of both the GNN and the RNN degrades. In the case of the GNN, the drop in accuracy is expected as the input sequence has doubled in length and the GNN is not recurrent. In the case of the RNN, it corroborates the observation made in subsection \ref{sbs:comparison_gnn_rnn} that, even if the RNN has more representative power than the GRNN, it searches the representation space less efficiently because it lacks structural information about the graph. This presents a disadvantage in ``harder'', real-world problems, where there is a limited amount of training data, i.e., where data cannot be synthetically generated.}

\begin{table}[t]
\centering
\begin{tabular}{c|c|ccc}  \hline
  \multicolumn{2}{c|}{} & GNN & RNN & GRNN \\ \hline
 \multirow{3}{*}{$10\text{s}$} & \multirow{2}{*}{mean} & $13.4$\% & $11.1$\% & $\mathbf{13.8}$\% \\
 &  & $\pm 1.8$\% & ${\pm 8.3}$\% & $\mathbf{\pm 2.7}$\% \\ \cline{2-5}
 & max & $16.2$\% & $\mathbf{27.9}$\% & $19.2$\% \\ \hline
  \multirow{3}{*}{$20\text{s}$} & \multirow{2}{*}{mean} & $12.9$\% & $8.4$\% & $\mathbf{14.1}$\% \\
   &  & $\pm 2.0$\% & ${\pm 3.8}$\% & $\mathbf{\pm 3.1}$\% \\ \cline{2-5}
 & max & $15.7$\% & $15.3$\% & $\mathbf{19.7}$\% \\ \hline
\end{tabular}
\caption{\blue{Average earthquake region prediction accuracy (\%) achieved by the GNN, RNN, and GRNN for $10\text{s}$ and $20\text{s}$-long seismic waves.}}
\label{table:earthquake10}
\end{table}

\blue{
\subsubsection{Gating}
The average and maximum test accuracy achieved by the non-gated GRNN, the time, node and edge-gated GRNNs, and the DCRNN from \cite{li2017diffusion} are reported in Table \ref{table:earthquake10} for both the $20\text{s}$ and $30\text{s}$ seismic waves. In the $20\text{s}$ case, all architectures achieve roughly the same performance on average, and only the edge-gated GRNN and the DCRNN achieve a better overall performance than the GRNN in their best data splits. On the other hand, in the $30\text{s}$ case the e-GRNN is clearly best, outperforming the GRNN in over 3 p.p. and classifying more than a third of the earthquakes in the test set correctly in the best data split. The time-gated and the node-gated GRNN do not improve upon the non-gated GRNN on average, but the node-gated GRNN exhibits a smaller variance. We observe the same pattern for the DCRNN, which uses the random walk matrix as the GSO and also contains node gates. These results show that the choice of GSO has no significant effect in performance. They also agree with intuition---the more a gating strategy encodes the graph structure, and the more degrees of freedom it has, the better it should perform with respect to the non-gated GRNN.
}

\begin{table}[t]
{\footnotesize
\centering
\begin{tabular}{c|c|ccccc}  \hline
  \multicolumn{2}{c|}{} & GRNN & t-GRNN & n-GRNN  & e-GRNN & \cite{li2017diffusion}\\ \hline
 \multirow{3}{*}{$20\text{s}$} & \multirow{2}{*}{mean} & $14.9$\% & $14.3$\% & ${13.7}$\% & ${15.4}$\%   & $\mathbf{15.6}$\% \\
 &  & $\pm 2.8$\% & ${\pm 1.7}$\% & ${\pm 2.4}$\% & ${\pm 3.6}$\% & $\mathbf{\pm 2.7}$\% \\ \cline{2-7}
 & max & $19.7$\% & ${16.2}$\% & $17.9$\% & $\mathbf{21.0}$\% & $20.1$\%\\ \hline
  \multirow{3}{*}{$30\text{s}$} & \multirow{2}{*}{mean} & $16.1$\% & $15.4$\% & ${15.7}$\% & $\mathbf{19.7}$\% & $15.3$\%\\
   &  & $\pm 4.5$\% & ${\pm 4.7}$\% & ${\pm 2.4}$\% & $\mathbf{\pm 5.3}$\% & ${\pm 2.2}$\% \\ \cline{2-7}
 & max & $25.3$\% & $27.9$\% & ${19.2}$\% & $\mathbf{32.2}$\% & $19.2$\%\\ \hline
\end{tabular}
\caption{\blue{Average earthquake region prediction accuracy (\%) achieved by the GRNN, t-GGRNN, n-GGRNN, e-GGRNN and DCRNN \cite{li2017diffusion} for $20\text{s}$ and $30\text{s}$-long seismic waves.}}
\label{table:earthquake20}
}
\end{table}


\blue{

\subsection{Traffic forecasting in Los Angeles} \label{sbs:traffic}

The goal of this experiment is to optimize the GRNN architecture and its gated variations to predict traffic speeds using the METR-LA dataset \cite{jagadish2014big}, which is a commonly used dataset for benchmark in traffic forecasting. For this reason, we also compare our architectures with the DCRNN from \cite{li2017diffusion} and the GCRN from \cite{seo2018structured}. 
The METR-LA dataset is a collection of speed readings measured between March and June 2012 by $N=207$ speed radars in the Los Angeles metropolitan area, and aggregated in 5-minute windows. The speed sensors are connected through an undirected geometric graph with $N$ nodes, which measures their pairwise road network distance. 

In our experiments, we use 20\% of the data split provided by \cite{li2017diffusion}, which utilizes 70\% of the data for training, 10\% for validation and 20\% for testing\footnote{Available at \url{https://github.com/liyaguang/DCRNN}.}. 
All simulated architectures consist of $L=1$ recurrent layer with $F_{\bbX}=1$ input feature, $F_{\bbZ}=24$ state features and $K_{\bbX}=K_{\bbZ}=5$ filter taps. The state features are mapped to the output using a 1-layer GNN with $K=1$ and $F_1=F_{\bbY}=1$ to predict the traffic speeds. We optimize the L1 loss over $40$ epochs, with learning rate $0.01$ and batch size $400$. The rRMSEs achieved on the test set by each architecture are presented in Table \ref{table:traffic}. All architectures achieve roughly the same performance, except for the time-gated GRNN, which has the worst performance by and large. This shows that time gating can be somewhat contrived for big graphs and graph processes with large spatial correlation, which is the case of traffic data. On the other hand, no spatial gating mechanism or graph recurrent architecture seems to be better than the other in this problem in particular. Edge and node gated architectures achieve similar test error and do not improve upon the GRNN. The node gated architectures, including the n-GRNN, DCRNN \cite{li2017diffusion} and GCRN \cite{seo2018structured}, are also pretty much equivalent, showing that the choice of GSO (adjacency, Laplacian or random walk matrix) has no bearing on each architecture's performance in this problem.
 
\begin{table}[t]
{
\centering
\begin{tabular}{ccccccc}  \hline
GRNN & t-GRNN & n-GRNN & e-GRNN & \cite{li2017diffusion} & \cite{seo2018structured} \\ \hline
$26.38$\% & $33.55$\% & $26.39$\% & $26.95$\% & $26.67$\% & $\mathbf{26.25}$\% \\ \hline
\end{tabular}
\caption{\blue{Relative RMSE (\%) achieved on the test set by the GRNN, t-GGRNN, n-GGRNN, e-GGRNN, DCRNN \cite{li2017diffusion} and GCRN \cite{seo2018structured} on the METR-LA dataset.}}
\label{table:traffic}
}
\end{table}

\subsection{Epidemic tracking on a friendship network} \label{sbs:epidemic}

In this experiment, we compare GRNNs, gated GRNNs and the gated graph recurrent architectures from \cite{li2017diffusion} and \cite{seo2018structured} in a binary node classification problem modeling the spread of an epidemic on a high school friendship network. 
The graph is an unweighted and undirected $134$-node network built from real data collected at a high school in Marseilles, France, in December 2013 \cite{mastrandrea2015contact}. The friendships are reported as directed links, but we symmetrize the network to model the spread of the disease in both directions if there is a contact between friends. To make sure the graph is connected, we also remove all of its isolated nodes.
As for the epidemic data, it is generated by using the SIR model to simulate the spread of an infectious disease on the friendship network \cite{fournet2017estimating}. The disease is first recorded on day $t=0$, when each individual node is infected with probability $p_{\mbox{\tiny seed}}=0.05$. On the days that follow, an infected student can then spread the disease to their susceptible friends with probability $p_{{\mbox{\tiny inf}}}=0.3$ each day. Infected students become immune after $4$ days, at which point they can no longer spread or contract the disease. 

Given the state of each node at some point in time (susceptible, infected or recovered), the binary node classification problem is to predict whether each node in the network will have the disease (i.e., be infected) $k=8$ days ahead. We train 6 models to solve this problem: a GRNN; time, node and edge-gated GRNNs; the DCRNN \cite{li2017diffusion}; and the GCRN \cite{seo2018structured}. All contain $L=1$ recurrent layer with $F_{\bbX}=1$ input feature, $F_{\bbZ}=12$ state features and $K_{\bbX}=K_{\bbZ}=5$ filter taps. The state features are mapped to the output using a 1-layer GNN with $K=1$ and $F_1=F_\bbY = 2$ features corresponding to the number of classes, followed by a softmax layer. The models are trained by optimizing the F1-score over $10$ training epochs, with learning rate $5\times 10^{-4}$ and batch size $100$. We report results for 10 Monte-Carlo realizations split between $1000$ samples for training, $120$  for validation and $200$ for testing.

The average F1-score, precision and recall achieved by each architecture on the test set are presented in Table \ref{table:epidemic}. We observe that the time and node-gated GRNNs and the GCRN \cite{seo2018structured}, which is also a node-gated architecture, achieve a F1-score that is $\sim 2$ p.p. higher than the F1-score of the GRNN. Meanwhile, the edge-gated GRNN and the DCRNN \cite{li2017diffusion} perform worse than the GRNN. The better performance of the node-gated architectures (except for the DCRNN) could be explained by the fact that node gating allows stopping information flows from nodes that have reached the recovered state and that, as such, do not affect disease transmission. However, this is only true for the n-GRNN and the GCRN \cite{seo2018structured}, which use the normalized adjacency and the normalized Laplacian respectively as their GSOs. In the case of the DCRNN, the benefit of node gating is cancelled by the fact that its GSO ---the random walk matrix--- penalizes the weights of edges associated with nodes with high degrees. This makes it harder to model the spread of the disease along these edges with graph convolutions, given that disease transmission is stochastic and only depends on there being an edge between two students, but not on the edge weight. A possible explanation for the good performance of the time-gated GRNN is the fact that, after a few time steps, the nodal states become more homogeneous: susceptible nodes give room to infected nodes at first, which later become recovered. 
As for the edge-gated GRNN, its worse performance with respect to other gating strategies can be related to the fact that stopping information flows from inactive nodes might require gating a large number of edges, which is more difficult than gating a few nodes (node gating) or the entire exchange (time gating). Finally, note that the t-GRNN, the n-GRNN and the GCRN \cite{seo2018structured} also achieve the highest recall among all architectures and therefore minimize false negatives, which is especially important when monitoring the spread of an infectious disease.

\begin{table}[t]
{\centering
\begin{tabular}{l|c|c|c} \hline
       & F1-score & Precision & Recall   \\ \hline
GRNN   & $0.782 \pm 0.044$    & $0.803 \pm 0.004$     & $0.958 \pm 0.076$   \\ 
t-GRNN & $\mathbf{0.805 \pm 0.003}$    & $0.802 \pm 0.004$    & $1.000 \pm 0.000$  \\ 
n-GRNN & $0.800 \pm 0.014$   & $0.802 \pm 0.004$     & $0.990 \pm 0.030$   \\ 
e-GRNN   & $0.682 \pm 0.247$    & $0.841 \pm 0.079$     & $0.838 \pm 0.325$   \\ 
\cite{li2017diffusion} & $0.745 \pm 0.179$    & $0.821 \pm 0.060$ & $0.920 \pm 0.240$ \\ 
\cite{seo2018structured}  & $\mathbf{0.805 \pm 0.003}$     & $0.802 \pm 0.004$      & $0.999 \pm 0.004$  \\ \hline
\end{tabular}
\caption{\blue{F1-score, precision and recall achieved by the GRNN, t-GGRNN, n-GGRNN, e-GGRNN, DCRNN \cite{li2017diffusion} and GCRN \cite{seo2018structured} on the epidemic modeling problem.}}
\label{table:epidemic}
}
\end{table}

}





\section{Conclusions} \label{sec:conclusions}

We have introduced a GSP-oriented GRNN framework tailored to learning problems involving graph processes. Merging the recurrent architecture of RNNs with graph convolutional layers, GRNNs are able to take both the sequential structure of data and the underlying graph topology into account. We have shown that GRNNs are Lipschitz stable to graph perturbations with a Lipschitz constant that is polynomial on the length of the graph processes considered. The GRNN architecture was also extended to include three gating strategies: (i) time gating, which enables encoding long term time dependencies without assigning them exponentially smaller/larger weights; (ii) node gating, in which each node has a gate and the graph structure is leveraged to control spatial dependencies on the graph; and (iii) edge gating, which achieves the same purpose but by assigning gates to edges instead. \blue{The advantages of these architectures were demonstrated in both synthetic and real-world problems, where they were also compared with gated graph recurrent architectures from the literature ---the DCRNN and the GCRN--- that fit the GRNN framework discussed in Sections \ref{sec:GRNN} and \ref{sec:gatedGRNN}. We observe that GRNN architectures outperform GNNs and RNNs in both settings and are more general than the architectures from \cite{seo2018structured,li2017diffusion}. We also note that the different gating strategies can sensibly improve GRNN performance in problems with long term temporal dependencies and long range spatial dependencies on the graph.} 


\appendices


\section{Proof of Proposition 1} \label{sec:appendixA}
\begin{proof}[Proof of Proposition \ref{prop:permutationEquivariance}]
Since the permutation matrix $\bbP \in \ccalP$ is orthogonal, we have $\bbP^\Tr\bbP=\bbP\bbP^\Tr$, which implies
\begin{equation}
\tbS^k = (\bbP^\Tr\bbS\bbP)^k = \bbP^\Tr\bbS^k\bbP\text{.}
\end{equation}
Writing $\bbA(\tbS)$ as in \eqref{eqn:graphConv}, we get
\begin{equation}
\bbA(\tbS) = \bbP^\Tr \bbA(\bbS) \bbP
\end{equation}
and so applying $\bbA(\tbS)$ to $\tbx=\bbP^\Tr\bbx$ yields
\begin{equation}
\bbA(\tbS)\tbx = \bbP^\Tr \bbA(\bbS) \bbP \bbP^\Tr \bbx = \bbP^\Tr \bbA(\bbS) \bbx \text{.}
\end{equation}
Graph convolutions are thus permutation equivariant. Using \eqref{eqn:GRNNhidden}, we can then write $\tbz_t$ as
\begin{align}
\tbz_t&=\sigma(\bbA(\tbS)\tbx_t+\bbB(\tbS)\tbz_{t-1})\\
&= \sigma(\bbP^\Tr\bbA(\bbS)\bbx_t+\bbP^\Tr\bbB(\bbS)\bbz_{t-1}) \\
&=\bbP^\Tr\sigma(\bbA(\bbS)\bbx_t+\bbB(\bbS)\bbz_{t-1}) = \bbP^\Tr \bbz_t
\end{align}
where the second-to-last equality follows from the fact that $\sigma$ is pointwise and hence permutation equivariant. 
Since $\rho$ is also pointwise, by a similar reasoning we have $\tby_t = \rho(\bbC(\tbS)\tbz_t)=\bbP^\Tr \rho(\bbC(\bbS)\bbz_t) = \bbP^\Tr\bby_t$.
\end{proof}

\section{Proof of Theorem 1} \label{sec:appendixB}

\begin{lemma} \label{lemmaStab}
Let $\bbS=\bbV\bbLam\bbV^\Hr$ and $\tilde{\bbS}$ be graph shift operators. Let $\bbE=\bbU\bbM\bbU^\Hr \in \ccalE(\bbS,\tilde{\bbS})$ be a relative perturbation matrix [cf. Definition \ref{def:errorSet}] whose norm is such that
\begin{equation*}
d(\bbS,\tilde{\bbS}) \leq \|\bbE\| \leq \varepsilon\text{.}
\end{equation*}
For an integral Lipschitz filter [cf. Definition \ref{def:integralLipschitz}] with integral Lipschitz constant $C$, the operator distance modulo permutation between filters $\bbH(\bbS)$ and $\bbH(\tilde{\bbS})$ satisfies
\begin{equation}
\|\bbH(\bbS)-\bbH(\tilde{\bbS})\|_{\ccalP} \leq 2C\left( 1 + \delta\sqrt{N}\right)\varepsilon + \ccalO(\varepsilon^2)
\end{equation}
with $\delta := (\|\bbU-\bbV\|_2 + 1)^2 - 1$ standing for the eigenvector misalignment between shift operator $\bbS$ and error matrix $\bbE$.
\end{lemma}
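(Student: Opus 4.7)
The plan is to follow the spectral perturbation strategy used in \cite{Gama19-Stability} for graph convolutional filters. Since $\|\cdot\|_\ccalP$ denotes the operator distance modulo permutation, I can pick the permutation $\bbP$ achieving the minimum in \eqref{eqn:graphDistance} and, without loss of generality, absorb it into $\tbS$ so that $\tbS = \bbS + \bbE\bbS + \bbS\bbE^\Tr$. The proof then reduces to bounding $\|\bbH(\tbS)-\bbH(\bbS)\|$ in operator norm for this aligned pair.

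First, I would expand the filter difference as
\[
\bbH(\tbS) - \bbH(\bbS) \;=\; \sum_{k=1}^{K-1} h_k\bigl(\tbS^k - \bbS^k\bigr),
\]
and use the relative-perturbation identity to obtain the first-order expansion
\[
\tbS^k - \bbS^k \;=\; \sum_{r=0}^{k-1}\bbS^r\bigl(\bbE\bbS + \bbS\bbE^\Tr\bigr)\bbS^{k-r-1} \;+\; \ccalO(\varepsilon^2),
\]
with the $\ccalO(\varepsilon^2)$ remainder uniform in $k<K$ because $\|\bbE\|\le\varepsilon$ and $\|\bbS\|$ is bounded. Substituting back gives, to first order, an expression in terms of the derivative filter $\sum_k k h_k \bbS^{k-1}$ sandwiched with $\bbE$ and $\bbS$.

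Next, I would diagonalize in the eigenbasis of $\bbS$. Writing $\bbV^\Hr \bbE\bbV = \bbD + \bbA$, where $\bbD$ is diagonal and $\bbA$ is its hollow off-diagonal part, the first-order term splits into an \emph{aligned} contribution (involving $\bbD$) and a \emph{misaligned} contribution (involving $\bbA$). For the aligned term, the spectral action reduces to multiplication by $\lambda_i h'(\lambda_i) \cdot 2 d_{ii}$ on the $i$-th eigenvector, so the integral Lipschitz condition $|\lambda h'(\lambda)|\le C$ gives a bound of the form $2C\|\bbD\|\le 2C\|\bbE\|\le 2C\varepsilon$. For the misaligned term, I would use the eigendecomposition $\bbE=\bbU\bbM\bbU^\Hr$ together with the identity $\|\bbA\|\le\|\bbV^\Hr\bbE\bbV - \bbD\|$ and the standard inequality (as in \cite{Gama19-Stability}) $\|\bbV^\Hr\bbU\bbM\bbU^\Hr\bbV - \mathrm{diag}(\cdot)\|_{F}\le \delta\sqrt{N}\|\bbE\|$, where $\delta=(\|\bbU-\bbV\|+1)^2-1$ measures eigenvector misalignment; combining with integral Lipschitzness yields a $2C\delta\sqrt{N}\varepsilon$ bound.

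Adding the two contributions and collecting the $\ccalO(\varepsilon^2)$ remainder produces $\|\bbH(\tbS)-\bbH(\bbS)\|\le 2C(1+\delta\sqrt{N})\varepsilon+\ccalO(\varepsilon^2)$, which is the claim. The main obstacle will be the careful bookkeeping in the first-order expansion of $\tbS^k$ and, most importantly, the tight control of the misaligned off-diagonal block through the eigenvector misalignment factor $\delta$; this step requires invoking the relation between the eigenbases of $\bbS$ and $\bbE$ and is where the $\sqrt{N}$ factor appears, coming from converting between Frobenius and operator norms on the full $N\times N$ spectral representation of $\bbE$.
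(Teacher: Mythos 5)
The paper does not actually prove this lemma itself---its proof is a one-line citation to \cite[Theorem 3]{Gama19-Stability}---and your plan is a faithful reconstruction of that reference's argument: first-order expansion of $\tbS^k-\bbS^k$ under the relative perturbation model, a split of $\bbE$ into a component aligned with the eigenbasis of $\bbS$ and a misaligned remainder controlled by $\delta\sqrt{N}$, and an integral Lipschitz bound on each piece. The one step your sketch under-specifies is the misaligned (off-diagonal) contribution, where the bound comes from the two-point form of the integral Lipschitz condition applied to the divided difference $(\lambda_i+\lambda_j)\bigl(h(\lambda_i)-h(\lambda_j)\bigr)/(\lambda_i-\lambda_j)$ rather than from $|\lambda h'(\lambda)|\le C$ alone, but that is exactly how the cited proof handles it, so your approach is correct in outline.
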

\begin{proof}
See \cite[Theorem 3]{Gama19-Stability}.
\end{proof}

\begin{proof}[Proof of Theorem \ref{thm:stability}]
Without loss of generality, assume $\bbP=\bbI$ in \eqref{eqn:graphDistance} and write $\tilde{\bbS}=\bbS + \bbE \bbS + \bbS \bbE^{\Tr}$, $\tbA = \bbA(\tbS)$, $\tbB = \bbB(\tbS)$ and $\tbC = \bbC(\tbS)$. 
From \eqref{eqn:GRNNoutput}, we can write 
\begin{align} \label{eqn:output_diff1}
\|\bby_t - \tilde{\bby}_t\| = \|\rho(\bbC\bbz_t)-\rho(\tbC\tbz_t)\| \leq \|\bbC\bbz_t-\tbC\tbz_t\|
\end{align}
since $\rho(\cdot)$ is normalized Lipschitz. Adding and subtracting $\bbC\tbz$ on the right-hand side of \eqref{eqn:output_diff1}, and using both the triangle and Cauchy-Schwarz inequalities, we get
\begin{align} \label{eqn:output_diff1.1}
\|\bby_t - \tilde{\bby}_t\| \leq \|\bbC\|\|\bbz_t-\tbz_t\| + \|\bbC-\tbC\|\|\tbz_t\|\text{.}
\end{align}
The norm of $\bbC$ is assumed bounded, and Lemma \ref{lemmaStab} gives a bound to $\|\bbC-\tbC\|$. Using \eqref{eqn:GRNNhidden}, we can write
\begin{align}
\|\bbz_t-\tbz_t\| &=\|\sigma(\bbA\bbx_t+\bbB\bbz_{t-1})-\sigma(\tbA\bbx_t+\tbB\tbz_{t-1})\| \\
&\leq \|\bbA\bbx_t+\bbB\bbz_{t-1}-(\tbA\bbx_t+\tbB\tbz_{t-1})\| \\
&\leq \|\bbA-\tbA\|\|\bbx_t\| + \|\bbB\bbz_{t-1}-\tbB\tbz_{t-1}\| \label{eqn:state_diff1}
\end{align}  
where the first inequality follows from the fact that $\sigma(\cdot)$ is also normalized Lipschitz and the second from the triangle and Cauchy-Schwarz inequalities respectively. The norm difference $\|\bbA-\tbA\|$ is bounded by Lemma \ref{lemmaStab} and $\|\bbx_t\| \leq \|\bbx\|$ for all $t$, so we move onto deriving a bound for the second summand of \eqref{eqn:state_diff1}. We rewrite it as
\begin{align} \label{eqn:state_diff2}
\begin{split}
\|\bbB\bbz_{t-1}+&\bbB\tbz_{t-1}-\bbB\tbz_{t-1}-\tbB\tbz_{t-1}\| \\ 
&\leq \|\bbB\|\|\bbz_{t-1}-\tbz_{t-1}\|+\|\bbB-\tbB\|\|\tbz_{t-1}\| 
\end{split}
\end{align}
which results in a recurrence relationship between $\|\bbz_t-\tbz_t\|$ and $\|\bbz_{t-1}-\tbz_{t-1}\|$. Expanding this recurrence, we obtain
\begin{align*}
\|\bbz_t-\tbz_t\| &\leq \sum_{i=0}^{t-1}\|\bbB\|^i\|\bbA-\tbA\|\|\bbx\| \\
&+ \|\bbB\|^t\|\bbz_0-\tbz_0\| + \|\bbB-\tbB\|\sum_{i=1}^t \|\tbz_{t-i}\| \\ 
&\leq \sum_{i=0}^{t-1}\|\bbB\|^i\|\bbA-\tbA\|\|\bbx\| + \|\bbB-\tbB\|\sum_{i=0}^{t-1} \|\tbz_{i}\|
\end{align*}
where the second inequality follows from $\bbz_0=\tbz_0$. Now it suffices to bound $\|\bbz_i\|$ for any given $i>0$. Writing $\bbz_t$ as in \eqref{eqn:GRNNhidden} and observing that, because $\sigma(\cdot)$ is normalized Lipschitz and $\sigma(0)=0$, $|\sigma(x)|<|x|$,  we can use the triangle and Cauchy-Schwarz inequalities to write
\begin{align} \label{eqn:bound_nongated_state}
\begin{split}
\|\bbz_i\| &\leq \|\bbA\|\|\bbx_i\|+\|\bbB\|\|\bbz_{i-1}\| \leq \ldots\\
&\leq \sum_{j=0}^{i-1}\|\bbB\|^j\|\bbA\|\|\bbx\|+\|\bbB\|^i\|\bbz_0\| 
\end{split}
\end{align}
for $i>0$.
Substituting this in \eqref{eqn:state_diff2}, we get
\begin{align} \label{eqn:state_diff3}
\begin{split}
&\|\bbz_t-\tbz_t\| \leq \|\bbA-\tbA\|\|\bbx\|\sum_{i=0}^{t-1}\|\bbB\|^i \\
&+ \|\bbB-\tbB\|\bigg(\|\tbA\|\|\bbx\|\sum_{i=1}^{t-1} \sum_{j=0}^{i-1}\|\tbB\|^j+\|\bbz_0\|\sum_{i=0}^{t-1}\|\tbB\|^i\bigg) \text{.} 
\end{split}
\end{align}
Finally, substituting equations \eqref{eqn:state_diff2} and \eqref{eqn:state_diff3} in \eqref{eqn:output_diff1.1} gives
\begin{align*}
&\|\bby_t-\tby_t\| \leq \|\bbC\|\bigg[\|\bbA-\tbA\|\|\bbx\|\sum_{i=0}^{t-1}\|\bbB\|^i \\
&+ \|\bbB-\tbB\|\bigg(\|\tbA\|\|\bbx\|\sum_{i=1}^{t-1} \sum_{j=0}^{i-1}\|\tbB\|^j+\|\bbz_0\|\sum_{i=0}^{t-1}\|\tbB\|^i\bigg)\bigg]  \\
&+ \|\bbC-\tbC\|\bigg[\sum_{i=0}^{t-1}\|\tbB\|^i\|\tbA\|\|\bbx\|+\|\tbB\|^t\|\bbz_0\|\bigg] \text{.}
\end{align*}
This expression can be simplified by applying Lemma \ref{lemmaStab} to the norm differences $\|\bbA-\tbA\|$, $\|\bbB-\tbB\|$ and $\|\bbC-\tbC\|$, and by recalling that $\|\bbA\|=\|\bbB\|=\|\bbC\|=1$, $\|\bbx\|=1$ and $\bbz_0=\boldsymbol{0}$. Denoting $C = \max\{C_{\bbA},C_{\bbB},C_{\bbC}\}$ the maximum filter Lipschitz constant, we recover \eqref{eqn:stability} with $\bbP=\bbI$,
\begin{equation} 
\| \bby_{t} - \tby_{t} \| \leq C(1+\sqrt{N} \delta)(t^2+3t) \varepsilon\ + \ccalO(\varepsilon^{2})
\end{equation}
which completes the proof. 
\end{proof}

\section{Proof of Theorem 2} \label{sec:appendixC}
\blue{
\begin{proof}[Proof of Theorem \ref{thm:GGRNNstability}]
Without loss of generality, we will assume $\bbP = \bbI$ in \eqref{eqn:graphDistance} and write $\tbS=\bbS+\bbE\bbS+\bbS\bbE^\Tr$, $\tbA=\bbA(\tbS)$, $\tilde{\hbA}=\hbA(\tbS)$, $\tilde{\cbA}=\cbA(\tbS)$, $\tbB=\bbB(\tbS)$, $\tilde{\hbB}=\hbB(\tbS)$, $\tilde{\cbB}=\cbB(\tbS)$ and $\tbC=\bbC(\tbS)$.
We also denote the input and forget gate operators whose parameters $\hat{\bbtheta}$ and $\check{\bbtheta}$ have been perturbed by
\begin{align*}
\begin{split}
\hcalQ_{\tilde{\hat{\bbtheta}}} = \tilde{\hcalQ} \quad \mbox{and} \quad \kcalQ_{\tilde{\check{\bbtheta}}} = \tilde{\kcalQ}.
\end{split}
\end{align*}

By the same reasoning used in equations \eqref{eqn:output_diff1} and \eqref{eqn:output_diff1.1} of the proof of Theorem \ref{thm:stability}, we start by bounding $\|\bby_t-\tby_t\|$ as
\begin{align} \label{eqn:thm2_output_diff1.1}
\|\bby_t - \tilde{\bby}_t\| \leq \|\bbC\|\|\bbz_t-\tbz_t\| + \|\bbC-\tbC\|\|\tbz_t\|\text{.}
\end{align}
Using \eqref{eqn:gatingGeneric} and AS\ref{as2}, we can write
\begin{align} \label{eqn:thm2_state_diff1}
\begin{split}
\|\bbz_t-\tbz_t\| =&\ \|\sigma(\hcalQ(\bbA\bbx_t)+\kcalQ(\bbB\bbz_{t-1}))\\
&-\sigma(\tilde{\hcalQ}(\tbA\bbx_t)+\tilde{\kcalQ}(\tbB\tbz_{t-1}))\| \\
&\leq \|\hcalQ(\bbA\bbx_t)-\tilde{\hcalQ}(\tbA\bbx_t)\| \\
&+\|\kcalQ(\bbB\bbz_{t-1})-\tilde{\kcalQ}(\tbB\tbz_{t-1})\| 
\end{split}
\end{align} 
where the bound follows from the triangle inequality.
Focusing on the first term on the right-hand side of \eqref{eqn:thm2_state_diff1}, we apply the triangle once again to get
\begin{align} \label{eqn:thm2_state_diff1.1}
\begin{split}
\|\hcalQ&(\bbA\bbx_t)-\tilde{\hcalQ}(\tbA\bbx_t)\| = \ldots\\
&= \|\hcalQ(\bbA\bbx_t)+\tilde{\hcalQ}(\bbA\bbx_t)-\tilde{\hcalQ}(\bbA\bbx_t)-\tilde{\hcalQ}(\tbA\bbx_t)\| \\
&\leq \|\hcalQ(\bbA\bbx_t)-\tilde{\hcalQ}(\bbA\bbx_t)\| +\|\tilde{\hcalQ}(\bbA\bbx_t)-\tilde{\hcalQ}(\tbA\bbx_t)\|.
\end{split}
\end{align}
Using the Cauchy-Schwarz inequality and AS\ref{as5},
\begin{align} \label{eqn:1.1}
\begin{split}
\|\hcalQ(\bbA\bbx_t)-\tilde{\hcalQ}(\bbA\bbx_t)\| \leq \|\hcalQ-\tilde{\hcalQ}\|\|\bbA\|\|\bbx_t\| \\
\leq Q\|\hat{\bbPhi}_\bbS(\hbz_t)-\hat{\bbPhi}_\tbS(\tilde{\hbz}_t)\|\|\bbA\|\|\bbx_t\|
\end{split}
\end{align}
and, adding and subtracting $\hat{\bbPhi}_\tbS(\hbz_t)$ to $\|\hat{\bbPhi}_\bbS(\hbz_t)-\hat{\bbPhi}_\tbS(\tilde{\hbz}_t)\|$ and applying the triangle inequality,
\begin{align} \label{eqn:1.1.1}
\begin{split}
\|\hat{\bbPhi}_\bbS&(\hbz_t)-\hat{\bbPhi}_\tbS(\tilde{\hbz}_t)\| \leq \ldots\\
&\leq \|\hat{\bbPhi}_\bbS(\hbz_t)-\hat{\bbPhi}_\tbS(\hbz_t)\| + \|\hat{\bbPhi}_\tbS(\hbz_t)-\hat{\bbPhi}_\tbS(\tilde{\hbz}_t)\| \\
&\leq \phi_2 \varepsilon \|\hbz_t\| + \phi_1 \|\hbz_t -\tilde{\hbz}_t\|
\end{split}
\end{align}
where the second inequality follows from AS\ref{as6} and AS\ref{as7}.

To bound the second term on the right-hand side of \eqref{eqn:thm2_state_diff1.1}, we use the fact that the gate operator is additive. Explicitly,
\begin{align} \label{eqn:1.2}
\begin{split}
\|\tilde{\hcalQ}(\bbA\bbx_t)-\tilde{\hcalQ}(\tbA\bbx_t)\| &= \|\tilde{\hcalQ}(\bbA\bbx_t-\tbA\bbx_t)\| \\
&\leq \|\bbA-\tbA\|\|\bbx_t\|
\end{split}
\end{align}
where we have used the Cauchy-Schwarz inequality and the fact that $\|\tilde{\hcalQ}\|\leq 1$. Putting together equations \eqref{eqn:1.1}, \eqref{eqn:1.1.1} and \eqref{eqn:1.2}, we arrive at a bound for $\|\hcalQ(\bbA\bbx_t)-\tilde{\hcalQ}(\tbA\bbx_t)\|$,
\begin{align} \label{eqn:1}
\begin{split}
\|\hcalQ(\bbA\bbx_t)&-\tilde{\hcalQ}(\tbA\bbx_t)\|
\leq Q\phi_2\varepsilon\|\hbz_t\|\\
&+ Q\phi_1\|\hbz_t-\tilde{\hbz}_t\| + \|\bbA-\tbA\|
\end{split}
\end{align}
where we have additionally used Assumptions AS\ref{as1} and AS\ref{as4} to bound $\|\bbA\|$ and $\|\bbx_t\|$. 

We add and subtract $\tilde{\kcalQ}(\bbB\bbz_{t-1})$ to the right-hand side of \eqref{eqn:thm2_state_diff1} and use the triangle inequality to get 
\begin{align*} 
\begin{split}
\|\kcalQ(\bbB\bbz_{t-1})-&\tilde{\kcalQ}(\tbB\tbz_{t-1})\| \leq \ldots \\
&\leq \|\kcalQ(\bbB\bbz_{t-1})-\tilde{\kcalQ}(\bbB\bbz_{t-1})\| \\
&+ \|\tilde{\kcalQ}(\bbB\bbz_{t-1})-\tilde{\kcalQ}(\tbB\tbz_{t-1})\|.
\end{split}
\end{align*}
Using the Cauchy-Schwarz inequality and AS\ref{as5},
\begin{align} \label{eqn:2.1}
\begin{split}
\|\kcalQ(\bbB\bbz_{t-1})-&\tilde{\kcalQ}(\bbB\bbz_{t-1})\| \leq \ldots \\
&\leq \|\kcalQ-\tilde{\kcalQ}\|\|\bbB\|\|\bbz_{t-1}\| \\
&\leq Q\|\check{\bbPhi}_\bbS(\cbz_{t})-\check{\bbPhi}_\tbS(\tilde{\cbz}_{t})\|\|\bbB\|\|\bbz_{t-1}\|
\end{split}
\end{align}
and adding and subtracting $\check{\bbPhi}_\tbS(\cbz_t)$ in $\|\check{\bbPhi}_\bbS(\cbz_{t})-\check{\bbPhi}_\tbS(\tilde{\cbz}_{t})\|$,
\begin{align} \label{eqn:2.1.1}
\|\check{\bbPhi}_\bbS(\cbz_{t})-\check{\bbPhi}_\tbS(\tilde{\cbz}_{t})\| \leq \phi_2 \varepsilon \|\cbz_t\| + \phi_1 \|\cbz_t -\tilde{\cbz}_t\|.
\end{align}

To bound $\|\tilde{\kcalQ}(\bbB\bbz_{t-1})-\tilde{\kcalQ}(\tbB\tbz_{t-1})\|$, we use the fact that the gating operator is additive and bounded by $1$ to write
\begin{align} \label{eqn:2.2}
\begin{split}
\|\tilde{\kcalQ}(\bbB\bbz_{t-1})-\tilde{\kcalQ}(\tbB\tbz_{t-1})\| &= \|\tilde{\kcalQ}(\bbB\bbz_{t-1}-\tbB\tbz_{t-1})\| \\
&\leq \|\tilde{\kcalQ}\|\|\bbB\bbz_{t-1}-\tbB\tbz_{t-1}\| \\
&\leq \|\bbB\bbz_{t-1}-\tbB\tbz_{t-1}\|
\end{split}
\end{align}
where the first inequality follows from Cauchy-Schwarz. Finally, adding and subtracting $\tbB\bbz_{t-1}$ to $\|\bbB\bbz_{t-1}-\tbB\tbz_{t-1}\|$ and applying the triangle and Cauchy-Schwarz inequalities, we get
\begin{align} \label{eqn:2.2.1}
\|\bbB\bbz_{t-1}-\tbB\tbz_{t-1}\| \leq \|\bbB-\tbB\|\|\bbz_{t-1}\|+\|\tbB\|\|\bbz_{t-1}-\tbz_{t-1}\|.
\end{align}
Putting together equations \eqref{eqn:2.1} through \eqref{eqn:2.2.1}, we arrive at a bound for $\|\kcalQ(\bbB\bbz_{t-1})-\tilde{\kcalQ}(\tbB\tbz_{t-1})\|$,
\begin{align} \label{eqn:2}
\begin{split}
\|\kcalQ(&\bbB\bbz_{t-1})-\tilde{\kcalQ}(\tbB\tbz_{t-1})\| \leq Q\phi_2\varepsilon \|\cbz_t\|\|\bbz_{t-1}\|\\
&+Q\phi_1\|\cbz_t-\tilde{\cbz}_t\|\|\bbz_{t-1}\|+\|\bbB-\tbB\|\|\bbz_{t-1}\|\\
&+\|\bbz_{t-1}-\tbz_{t-1}\|.
\end{split}
\end{align}

Plugging \eqref{eqn:1} and \eqref{eqn:2} back in \eqref{eqn:thm2_state_diff1}, we then see that the upper bound for $\|\bbz_t-\tbz_t\|$ satisfies a recurrence relationship. Explicitly,
\begin{align} \label{eqn:thm2_recurrence}
\begin{split}
\|\bbz_t-\tbz_t\| &\leq \|\bbz_{t-1}-\tbz_{t-1}\| + Q\phi_2\varepsilon\|\hbz_t\| \\
&+ Q\phi_1\|\hbz_t-\tilde{\hbz}_t\|
+ \|\bbA-\tbA\| \\
&+  \|\bbz_{t-1}\|Q(\phi_2\varepsilon \|\cbz_t\|
+\phi_1\|\cbz_t-\tilde{\cbz}_t\|) \\
&+ \|\bbz_{t-1}\|\|\bbB-\tbB\|.
\end{split}
\end{align}
Note that, because the GRNN used to compute the input gate state $\hbz_t$ is not gated, we can use equations \eqref{eqn:bound_nongated_state} and \eqref{eqn:state_diff3} to bound the second and third terms on the right-hand side, which only depend on $\|\hbz_t\|$ and $\|\hbz_t-\tilde{\hbz}_t\|$. We also know bounds for $\|\cbz_t\|$ and $\|\cbz_t-\tilde{\cbz}_t\|$, but the last two terms of \eqref{eqn:thm2_recurrence} also depend on $\|\bbz_{t-1}\|$. To bound this term, we use \eqref{eqn:gatingGeneric} to write
\begin{align*}
\begin{split}
\|\bbz_i\| &= \|\sigma(\hcalQ(\bbA\bbx_i)+\kcalQ(\bbB\bbz_{i-1}))\| \\
&\leq \|\hcalQ(\bbA\bbx_i)+\kcalQ(\bbB\bbz_{i-1})\| \\
&\leq \|A\|\|\bbx_i\|+\|\bbB\|\|\bbz_{i-1}\|
\end{split}
\end{align*}
where the first inequality follows from AS\ref{as2} and the second from the Cauchy-Schwarz inequality and the fact that the gate operator norms are bounded by $1$. We conclude that the addition of gate operators has no effect on this bound and thus $\|\bbz_i\|$ can be bounded as in \eqref{eqn:bound_nongated_state}. 

Substituting \eqref{eqn:bound_nongated_state} and \eqref{eqn:state_diff3} in \eqref{eqn:thm2_recurrence}, we solve the recurrence for $\|\bbz_t-\tbz_t\|$ and use Lemma \ref{lemmaStab} to obtain
\begin{align} \label{eqn:thm2_state}
\begin{split}
\|\bbz_t-\tbz_t\| &\leq C'(1+\delta\sqrt{N})(t+t^2)\varepsilon \\
&+ Q\left(\phi_2+\phi_1C'(1+\delta\sqrt{N})\right)t^3\varepsilon\\
&+Q\phi_1C'(1+\delta\sqrt{N})t^4\varepsilon +\ccalO(\varepsilon^2)
\end{split}
\end{align}
where we have used $C'=\max\{C_{\bbA},C_{\bbB},C_{\hbA},C_{\hbB},C_{\cbA},C_{\cbB}\}$ and $\bbz_0 = \boldsymbol{0}$ [cf. AS\ref{as3}]. Plugging \eqref{eqn:thm2_state} and \eqref{eqn:state_diff3} into \eqref{eqn:thm2_output_diff1.1} and using Lemma \ref{lemmaStab} once again, we arrive at the theorem's main result,
\begin{align}
\begin{split}
\|\bby_t-\tby_t\| &\leq C(1+\delta\sqrt{N})(3t+t^2)\varepsilon \\
&+ Q\left(\phi_2+\phi_1C(1+\delta\sqrt{N})\right)t^3\varepsilon\\
&+Q\phi_1C(1+\delta\sqrt{N})t^4\varepsilon+\ccalO(\varepsilon^2)
\end{split}
\end{align}
where $C = \max\{C',C_{\bbC}\}$.

\end{proof}
}


\bibliographystyle{IEEEtran}
\bibliography{myIEEEabrv,bibGGRNN}

\end{document}